\newcommand{\hv}{h.v.\thinspace}
\newtheorem{theorem}{Theorem}[section]
\newtheorem{proposition}[theorem]{Proposition}
\newtheorem{definition}[theorem]{Definition}
\newtheorem{lemma}[theorem]{Lemma}
\newtheorem{remark}[theorem]{Remark}
\newtheorem{corollary}[theorem]{Corollary}
\newenvironment{proof}[1][Proof]{\noindent \textbf{#1.} }{\  \rule{0.5em}{0.5em}}
\begin{document}

\title{Fiber Products of Measures and Quantum Foundations\thanks{We are grateful to Samson Abramsky, Bob Coecke, Amanda Friedenberg, Barbara Rifkind, Gus Stuart, and Noson Yanofsky for valuable conversations, to John Asker, Axelle Ferri\`ere, Elliot Lipnowski, Andrei Savochkin, participants at the workshop on Semantics of Information, Dagstuhl, June 2010, and participants at the conference on Advances in Quantum Theory, Linnaeus University, V\"{a}xj\"{o}, June 2010, for helpful input, to a referee and the volume editors for very important feedback, and to the Stern School of Business for financial support. {\tiny fpmqf-01-14-14}}}
\author{Adam Brandenburger\thanks{Address: Stern School of Business, New York University, New York, NY 10012, \href{mailto:adam.brandenburger@stern.nyu.edu}{adam.brandenburger@stern.nyu.edu}, \href{http://stern.nyu.edu/\~abranden}{www.stern.nyu.edu/$\sim$abranden}}
\and
H. Jerome Keisler\thanks{Address: Department of Mathematics, University of Wisconsin-Madison, Madison, WI 53706, \href{mailto:keisler@math.wisc.edu}{keisler@math.wisc.edu}, \href{http://math.wisc.edu/\~keisler}{www.math.wisc.edu/$\sim$keisler}}}
\date{Version 01/14/14\\
\medskip
}
\maketitle

\begin{abstract}
With a view to quantum foundations, we define the concepts of an \textbf{empirical model} (a probabilistic model describing measurements and outcomes), a \textbf{hidden-variable model} (an empirical model augmented by unobserved variables), and various properties of hidden-variable models, for the case of infinite measurement spaces and finite outcome spaces.  Thus, our framework is general enough to include, for example, quantum experiments that involve spin measurements at arbitrary relative angles.  Within this framework, we use the concept of the \textbf{fiber product of measures} to prove general versions of two determinization results about hidden-variable models.  Specifically, we prove that: (i) every empirical model can be realized by a deterministic hidden-variable model; (ii) for every hidden-variable model satisfying locality and $\lambda$-independence, there is a realization-equivalent hidden-variable model satisfying determinism and $\lambda$-independence.
\end{abstract}

\section{Introduction}
\label{s-intro}
\thispagestyle{empty}
Hidden variables are extra variables added to the model of an experiment to explain correlations in the outcomes.  Here is a simple example.  Alice's and Bob's computers have been prepared with the same password.  We know that the password is either p2s4w6r8 or 1a3s5o7d, but we do not know which it is.  If Alice now types in p2s4w6r8 and this unlocks her computer, we immediately know what will happen when Bob types in one or other of the two passwords.  The two outcomes --- when Alice types a password and Bob types a password --- are perfectly correlated.  Clearly, it would be wrong to conclude that, when Alice types a password on her machine, this somehow causes Bob's machine to acquire the same password.  The correlation is purely informational: It is our state of knowledge that changes, not Bob's computer.  Formally, we can consider an r.v. (random variable) $X$ for Alice's password, an r.v.\,\,$Y$ for Bob's password, and an extra r.v.\,\,$Z$.  The r.v.\,\,$Z$ takes the value $z_1$ or $z_2$ according as the two machines were prepared with the first or the second password.  Then, even though $X$ and $Y$ will be perfectly correlated, they will also be independent (trivially so), conditional on the value of $Z$.  In this sense, the extra r.v.\,\,$Z$ explains the correlation.

Of course, even in the classical realm, there are much more complicated examples of hidden-variable analysis.  But, the most famous context for hidden-variable analysis is quantum mechanics (QM).  Having started with von Neumann \cite[1932]{vonneumann32} and Einstein, Podolosky, and Rosen \cite[1935]{einstein-podolosky-rosen35}, the question of whether a hidden-variable formulation of QM is possible was re-ignited by Bell \cite[1964]{bell64}, whose watershed no-go theorem gave conditions under which the answer is negative.  The correlations that arise in QM --- for example, in spin measurements --- cannot be explained as reflecting the presence of hidden variables.

Let us specify a little more what we mean by an experiment.  We imagine that Alice can make one of several measurements on her part of a certain system, and Bob can make one of several measurements on his part of the system.  Each pair of measurements (one by Alice and one by Bob) leads to a pair of outcomes (one for Alice and one for Bob).  We can build an \textbf{empirical model} of the experiment by choosing appropriate spaces for the sets of possible measurements and outcomes, and by specifying, for each pair of measurements, a probability measure over pairs of outcomes.  An associated \textbf{hidden-variable} (henceforth \textbf{h.v.}) \textbf{model} is obtained by starting with the empirical model and then appending to it an extra r.v..

We can define various types of \hv model, according to what properties we ask of the model.  One property is \textbf{locality} (Bell \cite[1964]{bell64}), which can be decomposed into \textbf{parameter independence} and \textbf{outcome independence} (Jarrett \cite[1984]{jarrett84}, Shimony \cite[1986]{shimony86}).  Another property is \textbf{$\lambda$-independence} (the term is due to Dickson \cite[2005]{dickson05}), which says that the choices of measurement by Alice and Bob are independent of the process determining the values of any h.v.s.  Bell \cite[1985, p.95]{bell85} describes this as the condition that ``the settings of instruments are in some sense free variables."  We will use the term ``free variables" below.

Here are two basic types of \hv question one can ask:

\begin{itemize}

\item[i.]\textbf{The existence question}\,\,\,\,Suppose we are given a certain physical system and an empirical probability measure $e$ on the observable variables of the system.  Can we find an extended space that includes h.v.s, and a probability measure $p$ on this space, where $p$ satisfies certain properties (as above) and realizes (via marginalization) the empirical probability measure $e$?

\item[ii.]\textbf{The equivalence question}\,\,\,\,Suppose we are given an empirical probability measure $e$ on the observable variables of a system, and an \hv model, with probability measure $p$ that satisfies certain properties and realizes $e$.  Can we find another \hv model, with probability measure $q$, where $q$ satisfies other stipulated properties and also realizes $e$?

\end{itemize}

Bell's Theorem is the most famous negative answer to {i.}, obtained when the physical system is quantum and the properties demanded are locality and $\lambda$-independence.

In this paper we will focus on positive results for questions of both types {i.} and ii.  These positive results involve yet another property of \hv models:  The (strong) \textbf{determinism} property says that for each player, the h.v.s\, determine \lq non-probabilistically\rq\,\,(formally:\,\,almost surely) the outcome of any measurement.  As we will see in Section \ref{s-properties}, determinism implies locality.  We consider the following positive results on questions {i.} and {ii.}:

\begin{itemize}

\item[i.]\textbf{First determinization result} Every empirical model (whether generated by a classical or quantum or even superquantum system) can be realized by an \hv model satisfying determinism.

\item[ii.]\textbf{Second determinization result}\,\,Given an \hv model satisfying locality and $\lambda$-independence, there is a realization-equivalent \hv model that satisfies determinism and $\lambda$-independence.

\end{itemize}

Put together, these two results tell us a lot about Bell's Theorem.   The first determinization result says that for every empirical model, an \hv model with determinism is possible.  It is also true that for every empirical model, an \hv model with $\lambda$-independence is possible.  (This is a trivial construction, which we note in Remark \ref{r-singleton}.)   As usually stated, Bell's Theorem asks for an \hv model satisfying locality and $\lambda$-independence.  In light of the second determinization result, Bell's Theorem can be equivalently stated as asking for determinism and $\lambda$-independence.  Thus, Bell's Theorem teaches us that: \textit{It is possible to believe that Nature (in the form of QM) is deterministic, or it is possible to believe that measurement choices by experimenters are free variables, but it is not possible to believe both.}

The goal of this paper is to prove the two determinization results at a general measure-theoretic level (Theorems \ref{t-firsttheorem} and \ref{t-secondtheorem}).  Bell \cite[1971]{bell71} mentioned the idea of the first determinization result.  Fine \cite[1982]{fine82} produced the first version of the second determinization result.  Both results have been (re-)proved for various formulations in the literature.  A notable aspect of our formulation is that we allow for infinite measurement spaces.  Thus, our set-up is general enough to include, for example, experiments that involve spin measurements at arbitrary relative angles.  We assume that outcome sets are finite (such as spin up or spin down).

Our treatment uses the concept of the \textbf{fiber product of measures}.  The construction of these objects comes from Shortt \cite[1984]{shortt84}.  The name ``fiber product" is taken from Ben Yaacov and Keisler \cite[2009]{benyaacov-keisler09}, who employed the concept in the context of continuous model theory.  Fiber products of measures turn out to be well suited to the questions in quantum foundations which we study in this paper.

\section{Empirical and Hidden-Variable Models}
\label{s-empirical}
Alice has a space of possible measurements, which is a measurable space $(Y_{a},\mathcal{Y}_{a})$, and a space of possible outcomes, which is a measurable space $(X_{a},\mathcal{X}_{a})$.  Likewise, Bob has a space of possible measurements, which is a measurable space $(Y_{b},\mathcal{Y}_{b})$, and a space of possible outcomes, which is a measurable space $(X_{b},\mathcal{X}_{b})$.  Throughout, we will restrict attention to bipartite systems.  (We will comment later on the extension to more than two parts.)  There is also an \hv space, which is an unspecified measurable space $(\Lambda,\mathcal{L})$.  Write
\begin{align*}
(X,\mathcal{X}) & =  (X _{a},\mathcal{X}_{a})\otimes(X _{b},\mathcal{X}_{b}), \\
(Y,\mathcal{Y}) & =  (Y_{a},\mathcal{Y}_{a})\otimes(Y_{b},\mathcal{Y}_{b}), \\
\Psi & =  (X,\mathcal{X})\otimes(Y,\mathcal{Y}), \\
\Omega & =  (X,\mathcal{X})\otimes(Y,\mathcal{Y})\otimes(\Lambda ,\mathcal{L}).
\end{align*}

\begin{definition}
An \textbf{empirical model} is a probability measure $e$ on $\Psi$.
\end{definition}

We see that an empirical model describes an experiment in which the pair of measurements $y=(y_{a},y_{b})\in Y$ is randomly chosen according to the probability measure ${\rm marg}_{Y}e$, and $y$ and the joint outcome $x=(x_{a},x_b)\in X$ are distributed according to $e$.

\begin{definition}
A \textbf{hidden-variable} (\textbf{h.v.}) \textbf{model} is a probability measure $p$ on $\Omega $.
\end{definition}

\begin{definition}
We say that an \hv model $p$ \textbf{realizes} an empirical model $e$ if $e={\rm marg}_{\Psi}p$.  We say that two \hv models, possibly with different \hv spaces, are (\textbf{realization-})\textbf{equivalent} if they realize the same empirical model.
\end{definition}

An \hv model is an empirical model which has an extra component, viz., the \hv space, and which reproduces a given empirical model when we average over the values of the h.v..  The interest in \hv models is that we can ask them to satisfy properties that it would be unreasonable to demand of an empirical model.  Thus, in the example we began with, the property we ask for is conditional independence --- which we would only expect once the extra r.v.\,\,$Z$ is introduced.  We will come to other properties in Section \ref{s-properties}.

\section {Products and Fiber Products of Measures}
\label{s-preliminaries}
We first introduce notation and recall some well-known facts about product measures.  For background on the relevant measure theory, see e.g.\,Billingsley \cite[1995]{billingsley95}.

Recall that by a product $(X,\mathcal{X})\otimes(Y,\mathcal{Y})$ of two measurable spaces $(X,\mathcal{X})$ and $(Y,\mathcal{Y})$ is meant the (Cartesian) product space $X \times Y$ equipped with the $\sigma$-algebra generated by the measurable rectangles $J \times K$, where $J \in \mathcal{X}$ and $K \in \mathcal{Y}$.  We use the following two conventions.  First, when $p$ is a probability measure on $(X,\mathcal{X})\otimes(Y,\mathcal{Y})$ and $q = {\rm marg}_{X}p$, then for each $J\in \mathcal{X}$ we write
\begin{equation*}
p(J) = p(J\times Y) = q(J),
\end{equation*}
and for each $q$-integrable $f:X\rightarrow\mathbb{R}$ we write
\begin{equation*}
\int_{J}f(x)\,dp = \int_{J\times Y}f(x)\,dp = \int_{J}f(x)\,dq.
\end{equation*}
Thus, in particular, a statement holds for $p$-almost all $x\in X$ if and only if it holds for $q$-almost all $x\in X$.

Second, when $p$ is a probability measure on a product space $(X,\mathcal{X})\otimes(Y,\mathcal{Y})\otimes(Z,\mathcal{Z})$, $J\in\mathcal{X}$, and $z \in Z$, we write $p[J||\mathcal{Z}]$ for the conditional probability of $J$ given $z$.  Here, we refer to the concept of conditional probability given a sub $\sigma$-algebra; see Billingsley \cite[1995, Section 33]{billingsley95} for a presentation.  Formally, $p[J||\mathcal{Z}]$ denotes a function from $Z$ into $[0,1]$ such that
\begin{equation*}
p[J||\mathcal{Z}]_{z} = p[J\times Y\times Z|\{X\times Y,\emptyset\}\otimes \mathcal{Z}]_{(x,y,z)}.
\end{equation*}
(Note that $\{X\times Y,\emptyset\}$ is the trivial $\sigma$-algebra over $X\times Y$, so that the right-hand side does not depend on $(x,y)$.)

We use similar notation for (finite) products with factors to the left of $(X,\mathcal{X})$ or to the right of $(Z,\mathcal{Z})$.  Note that if $q = {\rm marg}_{X\times Z}p$, then $q[J||\mathcal{Z}]=p[J||\mathcal{Z}]$.  We will also need the concept of conditional expectation given a sub $\sigma$-algebra (Billingsley \cite[1995, Section 34]{billingsley95}), and we will use an analogous notation.  Thus, given an integrable function $f:X\rightarrow\mathbb{R}$, and $z \in Z$, we define ${\rm E}[f||\mathcal{Z}]$ by:
\begin{equation*}
{\rm E}[f||\mathcal{Z}]_{z} = {\rm E}[f\circ\pi|\{X\times Y,\emptyset\}\otimes\mathcal{Z}]_{(x,y,z)},
\end{equation*}
where we write $\pi$ for the projection from $X\times Y\times Z$ to $X$.

\begin{lemma}
\label{l-cond}
The mapping $z\mapsto p[J||\mathcal{Z}]_{z}$ is the $p$-almost surely unique $\mathcal{Z}$-measurable function $f:Z\rightarrow [0,1]$ such that for each set $L\in\mathcal{Z}$,
\begin{equation*}
\int_{L}f(z)\,dp = p(J\times L).
\end{equation*}
\end{lemma}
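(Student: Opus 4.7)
The plan is to unpack the definition of $p[J\|\mathcal{Z}]_z$ and show that the claimed property is just the defining property of a conditional probability given the sub $\sigma$-algebra $\mathcal{A}:=\{X\times Y,\emptyset\}\otimes\mathcal{Z}$, translated through the two notational conventions introduced just before the lemma.

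First I would invoke Billingsley's characterization of conditional probability: the function $g:(x,y,z)\mapsto p[J\times Y\times Z\mid\mathcal{A}]_{(x,y,z)}$ is (up to $p$-null sets) the unique $\mathcal{A}$-measurable function $g:X\times Y\times Z\to[0,1]$ satisfying
\begin{equation*}
\int_{M}g\,dp \;=\; p\bigl((J\times Y\times Z)\cap M\bigr)\qquad\text{for every }M\in\mathcal{A}.
\end{equation*}
Next I would observe that a function on $X\times Y\times Z$ is $\mathcal{A}$-measurable if and only if it factors through the projection to $Z$ via some $\mathcal{Z}$-measurable $f:Z\to[0,1]$; this is precisely why the right-hand side of the defining equation for $p[J\|\mathcal{Z}]$ is independent of $(x,y)$. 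Hence writing $g(x,y,z)=f(z)$ is legitimate, and the map $z\mapsto p[J\|\mathcal{Z}]_z$ in the lemma is exactly this $f$.

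Then I would restrict the test sets $M\in\mathcal{A}$ to those of the form $M=X\times Y\times L$ with $L\in\mathcal{Z}$ (the only nonempty elements of $\mathcal{A}$), which gives $(J\times Y\times Z)\cap M=J\times Y\times L$. Using the first convention from the preceding paragraphs, $p(J\times Y\times L)=p(J\times L)$, and using the integral form of the same convention, $\int_{X\times Y\times L}f(z)\,dp=\int_{L}f(z)\,dp$. Substituting these identities into the defining equation yields exactly
\begin{equation*}
\int_{L}f(z)\,dp \;=\; p(J\times L)\qquad\text{for every }L\in\mathcal{Z},
\end{equation*}
as desired.

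Uniqueness is inherited from the uniqueness clause of the conditional-probability characterization: any other $\mathcal{Z}$-measurable $f':Z\to[0,1]$ satisfying the displayed integral identity would, when lifted to $X\times Y\times Z$, serve as a version of $p[J\times Y\times Z\mid\mathcal{A}]$, and hence must equal $f$ outside a $p$-null set. I do not expect a real obstacle here; the content of the lemma is entirely bookkeeping, and the only point requiring a touch of care is verifying that every element of $\mathcal{A}$ really is of the form $X\times Y\times L$ for some $L\in\mathcal{Z}$ (which is immediate from the definition of the product $\sigma$-algebra with a trivial first factor) so that restricting the test sets to this form loses no information.
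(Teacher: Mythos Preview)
Your proposal is correct and follows essentially the same route as the paper: both establish existence by unfolding the definition of $p[J\|\mathcal{Z}]$ as $p[J\times Y\times Z\mid\{X\times Y,\emptyset\}\otimes\mathcal{Z}]$ and applying the defining integral identity of conditional expectation over test sets $X\times Y\times L$. The only minor difference is in the uniqueness clause: you invoke the uniqueness of conditional probability from Billingsley as a black box, whereas the paper writes out the standard direct argument (taking $L=\{z:f(z)<g(z)\}$ and deriving a contradiction from $\int_L f\,dp=\int_L g\,dp=p(J\times L)$); these are of course the same argument at different levels of detail.
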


\begin{proof}  Existence:
 Let $f(z) = p[J||\mathcal{Z}]_{z}$.  Using the definition of $p[J||\mathcal{Z}]$, we see that
\begin{equation*}
\int_{L}f(z)\,dp = \int_{X\times Y\times L}{\rm E}[1_{J\times Y\times Z}|\{X\times Y,\emptyset \}\otimes\mathcal{Z}]\,dp =
\end{equation*}
\begin{equation*}
\int_{X\times Y\times L}1_{J\times Y\times Z}\,dp = p((X\times Y\times L)\cap (J\times Y\times Z)) = p(J\times L),
\end{equation*}
as required.

Uniqueness:  If $p(J)=0$, then $f(z)=g(z)=0$ $p$-almost surely.  Suppose $p(J)>0$.
Let $f$ and $g$ are two such functions and let $L=\{z:f(z)<g(z)\}$.  Then $L\in\mathcal{Z}$.
If $p(J\times L)>0$, then $p(L)>0$, and
$$0<\int_L g(z)\,dp-\int_L f(z)\,dp=\int_L g(z)-f(z)\,dp=0,$$
a contradiction.  Therefore $p(J\times L)=0$, so $p(L)=0$ and hence $f(z)\ge g(z)$ $p$-almost surely.
Similarly, $g(z)\ge f(z)$ $p$-almost surely, so $f(z)=g(z)$ $p$-almost surely.
\end{proof}

\begin{corollary}
\label{c-marg} Let $q$ be the marginal of $p$ on $X\times Z$.  Then, for each $J\in\mathcal{X}$, we have $p[J||\mathcal{Z}] = q[J||\mathcal{Z}]$ $q$-almost
surely.
\end{corollary}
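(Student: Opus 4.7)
The plan is to apply Lemma \ref{l-cond} twice (once for $p$, once for $q$) and then invoke its uniqueness clause. The corollary then reduces to checking that the function $z \mapsto p[J||\mathcal{Z}]_{z}$, which was defined via $p$, satisfies the integral identity that characterizes $q[J||\mathcal{Z}]$ via $q$.

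More concretely, set $f(z) = p[J||\mathcal{Z}]_{z}$. By Lemma \ref{l-cond}, $f$ is $\mathcal{Z}$-measurable and, for every $L \in \mathcal{Z}$,
\begin{equation*}
\int_{L} f(z)\,dp \;=\; p(J \times L).
\end{equation*}
Since $q = {\rm marg}_{X\times Z}\,p$, we have the two key identities $q(J\times L) = p(J\times Y \times L) = p(J\times L)$ and ${\rm marg}_{Z}\,q = {\rm marg}_{Z}\,p$. Using the conventions established in the paper for writing integrals against a measure as integrals against its marginal, the second identity gives $\int_{L} f(z)\,dp = \int_{L} f(z)\,dq$. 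Combining,
\begin{equation*}
\int_{L} f(z)\,dq \;=\; q(J \times L) \quad \text{for every } L \in \mathcal{Z}.
\end{equation*}

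Now I would apply Lemma \ref{l-cond} again, this time to the probability measure $q$ on $(X,\mathcal{X})\otimes(Z,\mathcal{Z})$ (the paper permits this extension of the lemma to products with fewer factors). The lemma says that the function $g(z) = q[J||\mathcal{Z}]_{z}$ is the $q$-almost surely unique $\mathcal{Z}$-measurable function with the property displayed above. Since $f$ also has this property, the uniqueness clause yields $f = g$ $q$-almost surely, i.e.\ $p[J||\mathcal{Z}] = q[J||\mathcal{Z}]$ $q$-almost surely.

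There is no real obstacle here; the only point requiring a moment's care is the passage between integrating against $p$ and against $q$, which is legitimate because both sides depend only on $z$ and ${\rm marg}_{Z}\,p = {\rm marg}_{Z}\,q$, so the paper's convention identifies the two integrals. Everything else is bookkeeping around Lemma \ref{l-cond}.
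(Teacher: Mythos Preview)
Your argument is correct and is exactly the intended derivation: the paper states the result as an immediate corollary of Lemma~\ref{l-cond} without giving a separate proof, and your use of the existence and uniqueness clauses of that lemma, together with the marginal conventions, is precisely how the corollary follows.
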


\begin{lemma}
\label{l-01} If $p[J||\mathcal{Z}]\in\{0,1\}$ $p$-almost surely, then $p[J||\mathcal{Y}\otimes\mathcal{Z}] = p[J||\mathcal{Z}]$ $p$-almost surely.
\end{lemma}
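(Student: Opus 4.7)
The plan is to exhibit the function $h(y,z) := p[J||\mathcal{Z}]_{z}$ as a valid version of $p[J||\mathcal{Y}\otimes\mathcal{Z}]$ and then invoke the uniqueness clause of Lemma \ref{l-cond}. Since $h$ depends on its arguments only through a $\mathcal{Z}$-measurable function, it is certainly $\mathcal{Y}\otimes\mathcal{Z}$-measurable, so by Lemma \ref{l-cond} (applied with the two-factor decomposition $(X,\mathcal{X})\otimes((Y,\mathcal{Y})\otimes(Z,\mathcal{Z}))$, conditioning on the sub-$\sigma$-algebra $\mathcal{Y}\otimes\mathcal{Z}$) it suffices to verify
\[
\int_{M} h(y,z)\,dp \;=\; p(J\times M) \qquad \text{for every } M\in\mathcal{Y}\otimes\mathcal{Z}.
\]

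First I would extract the key consequence of the $\{0,1\}$-hypothesis. Write $f(z) = p[J||\mathcal{Z}]_{z}$ and $A = \{z: f(z)=1\}\in\mathcal{Z}$, so that $f = 1_{A}$ $p$-almost surely. Applying Lemma \ref{l-cond} in the original grouping with $L = A$ gives $p(J\times A) = \int_{A} f\,dp = p(A)$, and with $L = A^{c}$ gives $p(J\times A^{c}) = \int_{A^{c}} f\,dp = 0$. Combining these, the events $J\times Y\times Z$ and $X\times Y\times A$ differ by a $p$-null set; that is, under $p$ the assertions \textquotedblleft$x\in J$\textquotedblright\ and \textquotedblleft$z\in A$\textquotedblright\ coincide almost surely.

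With this equivalence in hand the verification is essentially automatic: for any $M\in\mathcal{Y}\otimes\mathcal{Z}$,
\[
\int_{M} h(y,z)\,dp \;=\; \int_{X\times M} 1_{A}(z)\,dp \;=\; p\bigl((X\times M)\cap (X\times Y\times A)\bigr) \;=\; p(J\times M),
\]
the first equality because $f = 1_{A}$ $p$-a.s., and the last because $J\times Y\times Z$ and $X\times Y\times A$ agree $p$-a.s. The uniqueness clause of Lemma \ref{l-cond} then gives $p[J||\mathcal{Y}\otimes\mathcal{Z}] = h = p[J||\mathcal{Z}]$ $p$-almost surely.

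The only real obstacle is bookkeeping of the notational convention by which $p(J\times L)$ in a three-factor product tacitly means $p(J\times Y\times L)$, together with the two different factor-groupings needed to invoke Lemma \ref{l-cond} for conditioning on $\mathcal{Z}$ and on $\mathcal{Y}\otimes\mathcal{Z}$ respectively. Once those are reconciled, the argument is just the observation that a $\{0,1\}$-valued conditional probability identifies $J$ with a set in the conditioning $\sigma$-algebra up to null sets, after which enlarging the conditioning $\sigma$-algebra cannot change anything.
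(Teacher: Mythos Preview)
Your proof is correct and follows essentially the same approach as the paper's: both define the $\mathcal{Z}$-set where $p[J||\mathcal{Z}]=1$ (your $A$, the paper's $L_1$), use Lemma~\ref{l-cond} to obtain $p(J\times A)=p(A)$ and $p(J\times A^{c})=0$, and then conclude via Lemma~\ref{l-cond} applied with the finer $\sigma$-algebra $\mathcal{Y}\otimes\mathcal{Z}$. The only cosmetic difference is that the paper argues separately on $Y\times L_0$ and $Y\times L_1$ (using nonnegativity to force the conditional probability to $0$ or $1$ there), whereas you package the same information as ``$J\times Y\times Z$ and $X\times Y\times A$ agree up to a $p$-null set'' and verify the defining integral identity for all $M\in\mathcal{Y}\otimes\mathcal{Z}$ at once.
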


\begin{proof}
Let $L_{0} = \{z\in Z:p[J||\mathcal{Z}]_{z}=0\}$ and $L_{1} = \{z\in Z:p[J||\mathcal{Z}]z=1\}$.  Then $L_{0},L_{1}\in\mathcal{Z}$ and $p(L_{0}\cup L_{1}) = 1$.  By Lemma \ref{l-cond},
\begin{equation*}
\int_{L_{0}}p[J||\mathcal{Z}]_{z}\,dp = 0 = p(J\times L_{0}),
\end{equation*}
\begin{equation*}
\int_{L_{1}}p[J||\mathcal{Z}]_{z}\,dp = p(L_{1}) = p(J\times L_{1}).
\end{equation*}
By Lemma \ref{l-cond} again,
\begin{equation*}
\int_{Y\times L_{0}}p[J||\mathcal{Y}\otimes\mathcal{Z}]_{(y,z)}\,dp = p(J\times Y\times L_{0}) = p(J\times L_{0}) = 0,
\end{equation*}
so
\begin{equation*}
p[J||\mathcal{Y}\otimes\mathcal{Z}]_{(y,z)} = 0 = p[J||\mathcal{Z}]_{z}\,\,\forall\,\,(y,z)\in Y\times L_{0}.
\end{equation*}
Similarly,
\begin{equation*}
\int_{Y\times L_{1}}p[J||\mathcal{Y}\otimes\mathcal{Z}]_{(y,z)}\,dp = p(J\times Y\times L_{1}) = p(J\times L_{1}) = p(L_{1}),
\end{equation*}
so
\begin{equation*}
p[J||\mathcal{Y}\otimes\mathcal{Z}]_{(y,z)} = 1 = p[J||\mathcal{Z}]_{z}\,\,\forall\,\,(y,z)\in Y\times L_{1},
\end{equation*}
as required.
\end{proof}
\medskip

When $x\in X$, we write $p[x||\mathcal{Z}]_{z}=p[\{x\}||\mathcal{Z}]_{z}$.  For the particular case of finite $X$, we get, by the properties of probability measures, that $\sum_{x\in X}p[x||\mathcal{Z}]_{z} = 1$ $p$-almost surely.

Given probability measures $p$ on $(X,\mathcal{X})\otimes (Y,\mathcal{Y})$ and $r$ on $(Y,\mathcal{Y})$, we say that $p$ is an \textbf{extension} of $r$ if $r = {\rm marg}_{Y}p$.  We say that two probability measures $p$ and $q$ on $(X,\mathcal{X})\otimes (Y,\mathcal{Y})$ \textbf{agree on} $Y$ if ${\rm marg}_{Y}p = {\rm marg}_{Y}q$.

Given probability spaces $(X,\mathcal{X},q)$ and $(Y,\mathcal{Y},r)$, the product measure $p=q\otimes r$ is the unique probability measure $p$ on $(X,\mathcal{X})\otimes (Y,\mathcal{Y})$ such that $q$ and $r$ are independent with respect to $p$, that is,
\begin{equation*}
p(J\times K)=q(J)\times r(K)
\end{equation*}
for all $J\in \mathcal{X}$ and $K\in \mathcal{Y}$.  Note that $p$ is a common extension of $q$ and $r$.

\begin{remark}
\label{r-product}
Let $(X,\mathcal{X},q)$ and $(Y,\mathcal{Y},r)$ be as above and let $p$ be a common extension of $q$ and $r$ on $(X,\mathcal{X})\otimes (Y,\mathcal{Y})$.  The following are equivalent:

\begin{enumerate}
\item[(i)] $p=q\otimes r$.

\item[(ii)]
 The $\sigma $-algebras $\mathcal{X}\otimes \{Y,\emptyset \}$ and $\{X,\emptyset \} \otimes \mathcal{Y}$ are independent with respect to $p$, that is,
\begin{equation*}
p(J\times K)=p(J)\times p(K)
\end{equation*}
for all $J\in \mathcal{X}$ and $K\in \mathcal{Y}$.

\item[(iii)] $p[J||\mathcal{Y}]_{y}=p(J)$ $p$-almost surely for all $J\in \mathcal{X}$.
\end{enumerate}
\end{remark}

We next introduce the notion of a fiber product of measures.  For the remainder of this section we let $\mathbf{X}=(X,\mathcal{X}),\mathbf{Y}=(Y,\mathcal{Y}), \mathbf{Z}=(Z,\mathcal{Z})$ be measurable spaces.

\begin{definition}
\label{d-fiber}
Let $q$ and $r$ be probability measures on $\mathbf{X}\otimes \mathbf{Z}$ and $\mathbf{Y}\otimes \mathbf{Z}$, respectively.  Assume that $q$ and $r$ have the same marginal $s$ on $\mathbf{Z}$.  We say that a probability measure $p$ on $\mathbf{X}\otimes \mathbf{Y}\otimes \mathbf{Z}$ is a \textbf{fiber product} of $q$ and $r$ over $Z$, in symbols $p = q\otimes _{Z}r$, if
\begin{equation*}
p(J \times K \times L) = \int_{L}q[J||\mathcal{Z}]_{z}\times r[K||\mathcal{Z}]_{z}\,ds
\end{equation*}
for all $J \in \mathcal{X}$, $K \in \mathcal{Y}$, and $L \in \mathcal{Z}$.
\end{definition}

Intuitively, the fiber product $q\otimes _{Z}r$ is the common extension of $q$ and $r$ with respect to which $q$ and $r$ are as independent as possible given that they have the same marginal on $Z$.  There are examples where a fiber product does not exist (see Swart \cite[1996]{swart96}).  But it is easily seen that if a fiber product $q\otimes _{Z}r$ does exist, then it is unique.  Next is a characterization of the fiber product in terms of conditional probabilities and extensions.

\begin{lemma}
\label{l-fiber}
Let $q$ and $r$ be as in Definition \ref{d-fiber}, and let $p$ be a common extension of $q,r$ on $\mathbf{X}\otimes \mathbf{Y}\otimes \mathbf{Z}$.  Then the following are equivalent:
\begin{enumerate}
\item[(i)] $p=q\otimes _{Z}r$.

\item[(ii)] $p[J\times K||\mathcal{Z}]_{z}=q[J||\mathcal{Z}]_{z}\times r[K||\mathcal{Z}]_{z}$ $p$-almost surely, for all $J\in \mathcal{X}$ and $K\in \mathcal{Y}$.

\item[(iii)] $p[J\times K||\mathcal{Z}]_{z}=p[J||\mathcal{Z}]_{z}\times p[K||\mathcal{Z}]_{z}$ $p$-almost surely, for all $J\in \mathcal{X}$ and $K\in \mathcal{Y}$.

\item[(iv)] $p[J||\mathcal{Y}\otimes \mathcal{Z}]_{(y,z)}=p[J||\mathcal{Z}]_{z}$ $p$-almost surely, for all $J\in \mathcal{X}$.
\end{enumerate}
\end{lemma}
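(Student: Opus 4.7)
The plan is to prove the equivalences in the order (i) $\Leftrightarrow$ (ii) $\Leftrightarrow$ (iii) $\Leftrightarrow$ (iv), with the uniqueness clause of Lemma \ref{l-cond} doing most of the work. For (i) $\Leftrightarrow$ (ii), I would observe that both sides say that a particular $\mathcal{Z}$-measurable function integrates to $p(J \times K \times L)$ over every $L \in \mathcal{Z}$: Definition \ref{d-fiber} says this of $z \mapsto q[J||\mathcal{Z}]_z\,r[K||\mathcal{Z}]_z$, while Lemma \ref{l-cond} itself says it of $z \mapsto p[J \times K || \mathcal{Z}]_z$. Since both are $\mathcal{Z}$-measurable, uniqueness in Lemma \ref{l-cond} forces them to agree $s$-almost surely precisely when (i) holds. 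For (ii) $\Leftrightarrow$ (iii), Corollary \ref{c-marg} applied to $q = \mathrm{marg}_{X \times Z}p$ and $r = \mathrm{marg}_{Y \times Z}p$ yields $p[J||\mathcal{Z}] = q[J||\mathcal{Z}]$ and $p[K||\mathcal{Z}] = r[K||\mathcal{Z}]$ $s$-almost surely, so substitution turns one statement into the other.

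For (iii) $\Leftrightarrow$ (iv), the approach is to compute $p(J \times K \times L)$ in two ways via Lemma \ref{l-cond}, once conditioning on $\mathcal{Z}$ and once on $\mathcal{Y} \otimes \mathcal{Z}$:
\begin{equation*}
\int_L p[J \times K || \mathcal{Z}]_z \, ds \;=\; p(J \times K \times L) \;=\; \int_{K \times L} p[J || \mathcal{Y} \otimes \mathcal{Z}]_{(y,z)} \, dp.
\end{equation*}
The bridge between the two sides is the pull-out identity
\begin{equation*}
\int_{K \times L} g(z) \, dp \;=\; \int_L g(z) \, p[K || \mathcal{Z}]_z \, ds
\end{equation*}
for any bounded $\mathcal{Z}$-measurable $g$, which I would verify first for indicators $g = 1_M$ with $M \in \mathcal{Z}$ (where it reduces directly to the defining property of $p[K||\mathcal{Z}]$ in Lemma \ref{l-cond}) and then extend by linearity and monotone convergence. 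With this in hand, (iv) $\Rightarrow$ (iii) drops out by taking $g(z) = p[J||\mathcal{Z}]_z$, substituting (iv) into the right-hand display, and invoking uniqueness in Lemma \ref{l-cond} for every $L \in \mathcal{Z}$; and (iii) $\Rightarrow$ (iv) follows by showing, via the same pull-out identity, that $z \mapsto p[J||\mathcal{Z}]_z$ (viewed as a $\mathcal{Y} \otimes \mathcal{Z}$-measurable function) satisfies the defining integral property of $p[J||\mathcal{Y} \otimes \mathcal{Z}]$ on rectangles $K \times L$, then extending to all $M \in \mathcal{Y} \otimes \mathcal{Z}$ by a Dynkin ($\pi$-$\lambda$) argument since the rectangles form a generating $\pi$-system.

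The main obstacle I anticipate is the pull-out identity, which is the one place the argument steps outside the verbatim content of Lemma \ref{l-cond} and Corollary \ref{c-marg}. It is routine, but demands a small monotone-class argument carried out carefully in the paper's $p[\,\cdot\,||\mathcal{Z}]$ notation; everything else is bookkeeping around uniqueness of conditional probabilities.
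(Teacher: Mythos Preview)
Your proposal is correct and follows essentially the same route as the paper.  The paper also treats (i)$\Leftrightarrow$(ii)$\Leftrightarrow$(iii) as immediate and then links to (iv) by computing $p(J\times K\times L)$ via conditioning on $\mathcal{Z}$ versus $\mathcal{Y}\otimes\mathcal{Z}$; the only cosmetic differences are that the paper cycles through (i)$\Leftrightarrow$(iv) rather than (iii)$\Leftrightarrow$(iv), invokes your ``pull-out identity'' as a standard rule of conditional expectation ($E[g\cdot 1_K\,\|\,\mathcal{Z}]=g\cdot p[K\,\|\,\mathcal{Z}]$ for $\mathcal{Z}$-measurable $g$) rather than rederiving it from indicators, and leaves the extension from rectangles to all of $\mathcal{Y}\otimes\mathcal{Z}$ implicit in the phrase ``it is enough to show,'' whereas you spell out the Dynkin step.
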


\begin{proof}
It is clear that (i), (ii), and (iii) are equivalent.  Consider any $J\in \mathcal{X},K\in \mathcal{Y}$, and $L\in \mathcal{Z}$.  Assume (i).  To
prove (iv), it is enough to show that
\begin{equation*}
\int_{K\times L}p[J||\mathcal{Z}]\,dp = p(J\times K\times L).
\end{equation*}
We have
\begin{equation*}
\int_{K\times L}p[J||\mathcal{Z}]\,dp = \int_{Y\times L}p[J||\mathcal{Z}] \times 1_{K}\,dp.
\end{equation*}
By the rules of conditional expectations,
\begin{equation*}
{\rm E}[p[J||\mathcal{Z}]\times 1_{K}||\mathcal{Z}] = p[J||\mathcal{Z}] \times {\rm E}[1_{K}||\mathcal{Z}] = p[J||\mathcal{Z}] \times p[K||\mathcal{Z}].
\end{equation*}
Therefore
\begin{equation*}
\int_{Y\times L}p[J||\mathcal{Z}] \times 1_{K}\,dp = \int_{L}p[J||\mathcal{Z}] \times p[K||\mathcal{Z}]\,dp = \int_{L}q[J||\mathcal{Z}] \times r[K||\mathcal{Z}]\,dp.
\end{equation*}
By (i), this is equal to $p(J\times K\times L)$, which shows that (i) implies (iv).

Now assume (iv).  Then
\begin{equation*}
p(J\times K\times L) = \int_{K\times L}p[J||\mathcal{Y}\otimes \mathcal{Z}]\,dp = \int_{K\times L}p[J||\mathcal{Z}]\,dp = \int_{Y\times L}p[J||\mathcal{Z}]\times 1_{K}\,dp.
\end{equation*}
As in the preceding paragraph,
\begin{equation*}
\int_{Y\times L}p[J||\mathcal{Z}]\times 1_{K}\,dp = \int_{L}q[J||\mathcal{Z}]\times r[K||\mathcal{Z}]\,dp,
\end{equation*}
and condition (i) is proved.
\end{proof}
\medskip

A version $g(J, z)$ of the conditional probability $q[J||\mathcal{Z}]_z$ is \textbf{regular} if $g(\cdot, z_0)$ is a probability measure on $\mathbf{X}$ for each fixed $z_0 \in Z$.  It is well known that when $\mathbf{X}$ and $\mathbf{Z}$ are both Polish spaces, then $q[J||\mathcal{Z}]_z$ has a regular version.  It is also easily seen that when $X$ is finite and $\mathbf{Z}$ is any measurable space, then $q[J||\mathcal{Z}]_z$ has a regular version.  This is the case we will need in this paper.  The next lemma is from Swart \cite[1996]{swart96}:

\begin{lemma}
Let $q$ and $r$ be as in Definition \ref{d-fiber}.  If $q[J||\mathcal{Z}]_z$ has a regular version, then the fiber product $q\otimes _{Z}r$ exists.
\end{lemma}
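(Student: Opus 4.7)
The plan is to build the fiber product measure directly, using the regular version of $q[J||\mathcal{Z}]$ as a transition kernel from $\mathbf{Z}$ into $\mathbf{X}$, and then verify the defining identity of Definition \ref{d-fiber} (or, equivalently, one of the conditions in Lemma \ref{l-fiber}).

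Write $g(J,z)$ for the given regular version of $q[J||\mathcal{Z}]_z$, so that $g(\cdot,z)$ is a probability measure on $\mathbf{X}$ for every $z$ and $g(J,\cdot)$ is $\mathcal{Z}$-measurable for every $J$. I would define a set function $p$ on $\mathbf{X} \otimes \mathbf{Y} \otimes \mathbf{Z}$ by
$$\int f\,dp \;=\; \int_{Y\times Z}\!\left(\int_X f(x,y,z)\,g(dx,z)\right)\,dr(y,z)$$
for bounded measurable $f$; equivalently, on rectangles, $p(J\times K\times L)=\int_{K\times L} g(J,z)\,dr(y,z)$. The inner integral is patently measurable in $(y,z)$ on rectangles (where it is $g(J,z)\cdot 1_{K}(y)$), and a routine monotone-class argument extends this to all bounded measurable $f$, so that $p$ is well-defined and countably additive as a probability measure on $\mathbf{X}\otimes \mathbf{Y}\otimes \mathbf{Z}$.

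Next I would check that $p$ is a common extension of $q$ and $r$. Setting $J=X$ and using $g(X,z)=1$ (this is where regularity of $g$ is essential) gives $p(X\times K\times L)=r(K\times L)$, so $\mathrm{marg}_{\mathbf{Y}\otimes\mathbf{Z}}\, p=r$. Setting $K=Y$ collapses the $r$-integral to an $s$-integral because $g(J,z)$ does not depend on $y$, and the defining property of the regular version $g$ then identifies $\int_L g(J,z)\,ds(z)$ with $q(J\times L)$, so $\mathrm{marg}_{\mathbf{X}\otimes\mathbf{Z}}\, p=q$.

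Finally I would verify the fiber-product formula. Starting from
$$p(J\times K\times L) \;=\; \int_{Y\times L} 1_K(y)\,g(J,z)\,dr(y,z)$$
and applying Lemma \ref{l-cond} to $r$, using that the factor $g(J,z)$ is $\mathcal{Z}$-measurable and therefore comes outside the conditional expectation of $1_K$ given $\mathcal{Z}$, I obtain $\int_L g(J,z)\cdot r[K||\mathcal{Z}]_z\,ds(z)$, which is exactly the defining equation of $q\otimes_Z r$. The only real obstacle is the measurability of the inner integral for general $f$, but this is the standard Fubini-style monotone-class argument that is available precisely because we have a regular kernel in hand.
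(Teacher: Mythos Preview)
Your argument is correct and is exactly the standard kernel construction one would expect here: use the regular version $g(\cdot,z)$ as a Markov kernel from $\mathbf{Z}$ to $\mathbf{X}$, push $r$ forward through it to get $p$, check the two marginals, and then pull the $\mathcal{Z}$-measurable factor $g(J,z)$ outside the conditional expectation of $1_K$ to recover the defining integral of Definition~\ref{d-fiber}. Each step is sound; in particular your use of regularity (to get $g(X,z)=1$ and to run the monotone-class/Fubini argument) is precisely where the hypothesis enters.

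As for comparison: the paper does not actually give a proof of this lemma. It simply attributes the result to Swart \cite[1996]{swart96} and moves on. So you have supplied what the paper outsourced. Your construction is essentially Swart's argument specialized to the product setting $\mathbf{X}\otimes\mathbf{Z}$, $\mathbf{Y}\otimes\mathbf{Z}$ with common factor $\mathbf{Z}$, and it meshes cleanly with the paper's Lemma~\ref{l-cond} and Lemma~\ref{l-fiber}.
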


\begin{corollary}
Let $q$ and $r$ be as in Definition \ref{d-fiber}.  If the space $X$ is finite, then the fiber product $q\otimes _{Z}r$ exists.
\end{corollary}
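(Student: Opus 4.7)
The plan is to reduce the corollary to the preceding lemma, which already guarantees existence of $q \otimes_Z r$ whenever $q[J||\mathcal{Z}]_z$ admits a regular version. Thus the entire task is to construct a regular version of $q[J||\mathcal{Z}]_z$ under the hypothesis that $X$ is finite; this was in fact already flagged as ``easily seen'' in the paragraph preceding the lemma, so our job is to make that remark precise.

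Concretely, enumerate $X = \{x_1, \ldots, x_n\}$. For each $i$, pick any $\mathcal{Z}$-measurable function $g_i: Z \to [0,1]$ representing the conditional probability $q[\{x_i\}||\mathcal{Z}]_z$. Using $\sigma$-additivity of $q$ and Lemma \ref{l-cond}, standard manipulations show that the set
\begin{equation*}
N = \Bigl\{z \in Z : g_i(z) \notin [0,1] \text{ for some } i, \text{ or } \textstyle\sum_i g_i(z) \neq 1\Bigr\}
\end{equation*}
lies in $\mathcal{Z}$ and satisfies $q(X \times N) = 0$, i.e.\ $s(N) = 0$ where $s = \mathrm{marg}_Z\,q$. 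Now redefine $g_i$ on $N$ in an arbitrary but coherent way, for example by setting $g_1(z) = 1$ and $g_i(z) = 0$ for $i \neq 1$ when $z \in N$. Call the resulting functions $\tilde g_i$; each $\tilde g_i$ is $\mathcal{Z}$-measurable and still equals $q[\{x_i\}||\mathcal{Z}]_z$ $s$-almost surely, hence is still a valid version.

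Finally, define
\begin{equation*}
g(J,z) = \sum_{x_i \in J} \tilde g_i(z) \qquad \text{for } J \subseteq X,\ z \in Z.
\end{equation*}
By construction, for every fixed $z_0 \in Z$ the set function $g(\cdot, z_0)$ is a probability measure on the (finite) measurable space $\mathbf{X}$, and for every fixed $J \subseteq X$ the function $g(J, \cdot)$ is a $\mathcal{Z}$-measurable version of $q[J||\mathcal{Z}]_z$. This is exactly a regular version, so the preceding lemma applies and yields the existence of $q \otimes_Z r$.

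The proof is essentially bookkeeping; the only subtlety is that $q[\{x_i\}||\mathcal{Z}]_z$ is defined only up to $s$-null sets, so one must redefine the versions on a common null set $N$ to guarantee that $g(\cdot, z_0)$ is a genuine probability measure for \textbf{every} $z_0$, not just almost every $z_0$. Finiteness of $X$ is what makes this redefinition possible within a single null set, so the argument does not extend to infinite $X$ without further topological hypotheses (such as the Polish assumption mentioned in the paper).
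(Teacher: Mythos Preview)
Your proposal is correct and follows exactly the route the paper intends: the paper states the corollary without proof, relying on the preceding lemma (Swart) together with the earlier remark that ``when $X$ is finite and $\mathbf{Z}$ is any measurable space, then $q[J||\mathcal{Z}]_z$ has a regular version.'' You have simply filled in the details behind that ``easily seen'' remark, and your null-set correction argument is the standard and correct way to do it.
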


\section {Properties of Hidden-Variable Models}
\label {s-properties}
We can now formulate the various properties of \hv models which we listed in the Introduction (we will not repeat their sources) and establish some relationships among them.  At this point, we adopt:
\smallskip

\noindent \textbf{Assumption:} \textit{The outcome spaces $X_a$ and $X_b$ are finite, and $\mathcal{X}_a$ and $\mathcal{X}_b$ are the respective power sets.}
\medskip

\noindent Also, whenever we write an equation involving conditional probabilities, it will be understood to mean that the equation holds $p$-almost surely.  By the term \textquotedblleft measure\textquotedblright\,\,we will always mean \textquotedblleft probability measure.\textquotedblright\,\, Fix an \hv model $p$.  We will often make use of the following notation:
\begin{equation*}
p_a = {\rm marg}_{X_a \times Y \times \Lambda}p,\quad p_b = {\rm marg}_{X_b \times Y \times \Lambda}p,
\end{equation*}
\begin{equation*}
q_a = {\rm marg}_{X_a \times Y_a \times \Lambda}p,\quad q_b = {\rm marg}_{X_b \times Y_b \times \Lambda}p,
\end{equation*}
\begin{equation*}
r = {\rm marg}_{Y \times \Lambda}p,
\end{equation*}
\begin{equation*}
p_Y = {\rm marg}_{Y}p,\quad p_\Lambda = {\rm marg}_{\Lambda}p.
\end{equation*}

\begin{center}
\setlength{\unitlength}{0.8mm}
\begin{picture}(180,45)

\put(0,0){\line(1,0){20}}
\put(30,0){\line(1,0){20}}
\put(60,0){\line(1,0){20}}
\put(90,0){\line(1,0){20}}
\put(0,10){\line(1,0){20}}
\put(30,10){\line(1,0){20}}
\put(60,10){\line(1,0){20}}
\put(90,10){\line(1,0){20}}
\put(120,10){\line(1,0){20}}
\put(0,20){\line(1,0){20}}
\put(30,20){\line(1,0){20}}
\put(60,20){\line(1,0){10}}
\put(90,20){\line(1,0){20}}
\put(120,20){\line(1,0){20}}
\put(0,30){\line(1,0){20}}
\put(30,30){\line(1,0){10}}
\put(60,30){\line(1,0){10}}
\put(0,0){\line(0,1){30}}
\put(30,0){\line(0,1){30}}
\put(60,0){\line(0,1){30}}
\put(90,0){\line(0,1){20}}
\put(120,10){\line(0,1){10}}
\put(20,0){\line(0,1){30}}
\put(50,0){\line(0,1){20}}
\put(80,0){\line(0,1){10}}
\put(110,0){\line(0,1){20}}
\put(140,10){\line(0,1){10}}
\put(10,10){\line(0,1){20}}
\put(40,10){\line(0,1){20}}
\put(70,10){\line(0,1){20}}
\put(100,10){\line(0,1){10}}
\put(130,10){\line(0,1){10}}

\put(10,5){\makebox(0,0){$\Lambda$}}
\put(40,5){\makebox(0,0){$\Lambda$}}
\put(70,5){\makebox(0,0){$\Lambda$}}
\put(100,5){\makebox(0,0){$\Lambda$}}
\put(5,15){\makebox(0,0){$ Y_a$}}
\put(35,15){\makebox(0,0){$ Y_a$}}
\put(65,15){\makebox(0,0){$ Y_a$}}
\put(95,15){\makebox(0,0){$ Y_a$}}
\put(125,15){\makebox(0,0){$ Y_a$}}
\put(15,15){\makebox(0,0){$ Y_b$}}
\put(45,15){\makebox(0,0){$ Y_b$}}
\put(105,15){\makebox(0,0){$ Y_b$}}
\put(135,15){\makebox(0,0){$ Y_b$}}
\put(5,25){\makebox(0,0){$ X_a$}}
\put(35,25){\makebox(0,0){$ X_a$}}
\put(65,25){\makebox(0,0){$ X_a$}}
\put(15,25){\makebox(0,0){$ X_b$}}

\put(10,35){\makebox(0,0){$p$}}
\put(40,35){\makebox(0,0){$p_a$}}
\put(70,35){\makebox(0,0){$q_a$}}
\put(100,35){\makebox(0,0){$r$}}
\put(130,35){\makebox(0,0){$p_Y$}}

\put(150,0){\line(1,0){20}}
\put(150,0){\line(0,1){10}}
\put(150,10){\line(1,0){20}}
\put(170,0){\line(0,1){10}}
\put(160,5){\makebox(0,0){$\Lambda$}}
\put(160,35){\makebox(0,0){$p_\Lambda$}}

\end{picture}
\end{center}

All expressions below which are given for Alice have counterparts for Bob, with $a$ and $b$ interchanged.

\begin{definition}
The \hv model $p$ satisfies \textbf{locality} if for every $x\in X$ we have
\begin{equation*}
p[x||\mathcal{Y}\otimes \mathcal{L}] = p[x_{a}||\mathcal{Y}_{a}\otimes\mathcal{L}]\times p[x_{b}||\mathcal{Y}_{b}\otimes\mathcal{L}].
\end{equation*}
\end{definition}

\begin{definition}
The \hv model $p$ satisfies \textbf{parameter independence} if for every $x_{a}\in X_{a}$ we have
\begin{equation*}
p[x _{a}||\mathcal{Y}\otimes\mathcal{L}] = p[x _{a}||\mathcal{Y}_{a}\otimes\mathcal{L}].
\end{equation*}
\end{definition}

Here is a characterization of parameter independence in terms of fiber products.

\begin{corollary}
\label{c-fiber-parameter}
$p$ satisfies parameter independence if and only if $p_{a}=q_{a}\otimes _{Y _{a}\times \Lambda }r$ and $p_{b}=q_{b}\otimes_{Y_{b}\times \Lambda}r$.
\end{corollary}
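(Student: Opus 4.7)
The plan is to reduce the claim, for each side $a$ and $b$ separately, to condition (iv) of Lemma \ref{l-fiber}, and then translate between conditional probabilities of $p$ and those of its marginals via Corollary \ref{c-marg}.

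First I would set up Alice's side by regrouping factors so that the ambient product $X_a\times Y\times \Lambda$ is read as $X_a\otimes Y_b\otimes(Y_a\times\Lambda)$, i.e.\ I take $\mathbf{X}=(X_a,\mathcal{X}_a)$, $\mathbf{Y}=(Y_b,\mathcal{Y}_b)$, and $\mathbf{Z}=(Y_a\times\Lambda,\mathcal{Y}_a\otimes\mathcal{L})$ in Lemma \ref{l-fiber}. A quick check of marginals shows that $p_a={\rm marg}_{X_a\times Y\times\Lambda}p$ is a common extension of $q_a={\rm marg}_{X_a\times Y_a\times\Lambda}p$ and $r={\rm marg}_{Y\times\Lambda}p$, and that $q_a$ and $r$ have the common marginal ${\rm marg}_{Y_a\times\Lambda}p$ on $Y_a\times\Lambda$. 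Since $X_a$ is finite, the fiber product $q_a\otimes_{Y_a\times\Lambda}r$ exists by the corollary to Swart's lemma, so it makes sense to ask whether $p_a$ equals it.

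Next I would apply Lemma \ref{l-fiber}, equivalence of (i) and (iv): the equality $p_a=q_a\otimes_{Y_a\times\Lambda}r$ holds if and only if, for every $x_a\in X_a$,
\begin{equation*}
p_a[x_a\,\Vert\,\mathcal{Y}_b\otimes\mathcal{Y}_a\otimes\mathcal{L}] \;=\; p_a[x_a\,\Vert\,\mathcal{Y}_a\otimes\mathcal{L}]\qquad p_a\text{-a.s.}
\end{equation*}
Modulo harmless reordering of factors, the left-hand $\sigma$-algebra is $\mathcal{Y}\otimes\mathcal{L}$. By Corollary \ref{c-marg} applied twice (once to the marginal $p_a$ of $p$, once to the marginal of $p$ on $X_a\times Y_a\times\Lambda$), each side coincides $p$-a.s.\ with the corresponding conditional probability computed from the full \hv measure $p$, namely $p[x_a\,\Vert\,\mathcal{Y}\otimes\mathcal{L}]$ and $p[x_a\,\Vert\,\mathcal{Y}_a\otimes\mathcal{L}]$. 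So Alice's fiber-product equation is equivalent to the Alice half of parameter independence.

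Finally I would run the symmetric argument on Bob's side, swapping $a$ and $b$ as permitted by the paper's convention, to obtain the equivalence for $p_b=q_b\otimes_{Y_b\times\Lambda}r$. The corollary then follows by combining the two equivalences, since parameter independence is the conjunction of the $X_a$ and $X_b$ conditions. The only step requiring any care is the bookkeeping of factors needed to match Lemma \ref{l-fiber}'s ordering $\mathbf{X}\otimes\mathbf{Y}\otimes\mathbf{Z}$; once the roles $(X_a,Y_b,Y_a\times\Lambda)$ are fixed, the rest is an invocation of the lemma and the marginal–conditional compatibility in Corollary \ref{c-marg}.
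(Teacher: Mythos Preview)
Your proposal is correct and follows essentially the same route as the paper's proof: both apply the equivalence (i)$\Leftrightarrow$(iv) of Lemma~\ref{l-fiber} to $p_a$ (with the factor regrouping you describe) and then pass between conditional probabilities of $p$ and of its marginal $p_a$. The paper compresses your invocation of Corollary~\ref{c-marg} into the single phrase ``since $p$ is an extension of $p_a$,'' and omits the existence check for the fiber product, but the argument is the same.
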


\begin{proof}
By Lemma \ref{l-fiber}, $p_{a}=q_{a}\otimes _{Y_{a}\times \Lambda }r$ if and only if
\begin{equation*}
p_{a}[x_{a}||\mathcal{Y}\otimes \mathcal{L}]=p_{a}[x_{a}||\mathcal{Y}_{a}\otimes \mathcal{L}]
\end{equation*}
for all $x_{a}\in X_{a}$.  Since $p$ is an extension of $p_{a}$, this holds if and only if
\begin{equation*}
p[x_{a}||\mathcal{Y}\otimes \mathcal{L}]=p[x_{a}||\mathcal{Y}_{a}\otimes \mathcal{L}]
\end{equation*}
for all $x_{a}\in X_{a}$.  Similarly, $p_{b}=q_{b}\otimes _{Y_{b}\times \Lambda }r$ if and only if
\begin{equation*}
p[x_{b}||\mathcal{Y}\otimes \mathcal{L}]=p[x_{b}||\mathcal{Y}_{b}\otimes \mathcal{L}]
\end{equation*}
for all $x_{b}\in X_{b}$.  The result follows.
\end{proof}
\medskip

\begin{definition}
The \hv model $p$ satisfies \textbf{outcome independence} if for every $x=(x_a,x_b)\in X$ we have
\begin{equation*}
p[x||\mathcal{Y}\otimes\mathcal{L}] = p[x _{a}||\mathcal{Y}\otimes\mathcal{L}]\times p[x_{b}||\mathcal{Y}\otimes\mathcal{L}].
\end{equation*}
\end{definition}

The following corollary characterizes outcome independence in terms of fiber products.

\begin{corollary}
\label{c-fiber-outcome}
$p$ satisfies outcome independence if and only if $p = p_{a}\otimes _{Y \times \Lambda }p_{b}$.
\end{corollary}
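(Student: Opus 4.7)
The plan is to deduce this from Lemma \ref{l-fiber} applied with the roles $\mathbf{X}=(X_a,\mathcal{X}_a)$, $\mathbf{Y}=(X_b,\mathcal{X}_b)$, and $\mathbf{Z}=(Y\times\Lambda,\mathcal{Y}\otimes\mathcal{L})$. Under this identification, $p_a$ is a measure on $\mathbf{X}\otimes\mathbf{Z}$, $p_b$ is a measure on $\mathbf{Y}\otimes\mathbf{Z}$, and the product space $\mathbf{X}\otimes\mathbf{Y}\otimes\mathbf{Z}$ is exactly $\Omega$. So the fiber product $p_a\otimes_{Y\times\Lambda}p_b$, if it exists, is a measure on $\Omega$ and can be compared to $p$.

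First I would verify the hypotheses of Lemma \ref{l-fiber}. The measures $p_a$ and $p_b$ share the common marginal $r$ on $Y\times\Lambda$ (by construction, both are obtained from $p$ by integrating out an outcome variable), and $p$ itself is a common extension of $p_a$ and $p_b$ on $\Omega$, since $p_a$ and $p_b$ are already marginals of $p$. (Existence of the fiber product follows from the corollary to Swart's lemma, since $X_a$ is finite.)

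Next I would apply Lemma \ref{l-fiber}, condition (iii), which in the current notation reads
\begin{equation*}
p[J\times K\,\|\,\mathcal{Y}\otimes\mathcal{L}]_{(y,\lambda)} = p[J\,\|\,\mathcal{Y}\otimes\mathcal{L}]_{(y,\lambda)}\times p[K\,\|\,\mathcal{Y}\otimes\mathcal{L}]_{(y,\lambda)}
\end{equation*}
$p$-almost surely, for all $J\in\mathcal{X}_a$ and $K\in\mathcal{X}_b$. This is equivalent to $p=p_a\otimes_{Y\times\Lambda}p_b$.

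Finally, I would observe that because $X_a$ and $X_b$ are finite with power-set $\sigma$-algebras (by the Assumption of this section), the rectangle condition above is equivalent to its special case $J=\{x_a\}$, $K=\{x_b\}$, by finite additivity applied to both sides. Noting that $\{x\}=\{x_a\}\times\{x_b\}$ for $x=(x_a,x_b)$, this singleton form is exactly the definition of outcome independence:
\begin{equation*}
p[x\,\|\,\mathcal{Y}\otimes\mathcal{L}] = p[x_a\,\|\,\mathcal{Y}\otimes\mathcal{L}]\times p[x_b\,\|\,\mathcal{Y}\otimes\mathcal{L}].
\end{equation*}
The two characterizations therefore coincide. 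I do not expect a genuine obstacle here: the substantive work has been done in Lemma \ref{l-fiber}, and the only thing to be careful about is the reduction from general measurable rectangles to singletons, which is immediate from the finiteness of the outcome spaces.
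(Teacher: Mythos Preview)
Your proposal is correct and follows exactly the approach the paper intends: the paper's proof is the single line ``This follows easily from Lemma \ref{l-fiber},'' and you have simply spelled out the identification $\mathbf{X}=(X_a,\mathcal{X}_a)$, $\mathbf{Y}=(X_b,\mathcal{X}_b)$, $\mathbf{Z}=(Y\times\Lambda,\mathcal{Y}\otimes\mathcal{L})$ and the reduction from rectangles to singletons via finiteness of the outcome spaces.
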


\begin{proof}
This follows easily from Lemma \ref{l-fiber}.
\end{proof}
\medskip

The next proposition follows Jarrett \cite[1984, p.582]{jarrett84}.

\begin{proposition}
\label{p-param-outcome} $p$ satisfies locality if and only if it satisfies parameter independence and outcome independence.
\end{proposition}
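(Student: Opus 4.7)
The plan is to use the finiteness of $X_a$ and $X_b$ so that conditional probabilities behave like ordinary distributions and can be summed termwise; this turns the equivalence into straightforward algebra on the $p$-almost-everywhere equations. Both directions are short, and the main subtlety is just being careful that the ``almost surely'' clauses survive summation, which they do because $X$ is finite (so only finitely many null sets are involved).

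For the easy direction (parameter independence plus outcome independence imply locality), I would simply chain the two hypotheses: for any $x=(x_a,x_b)\in X$,
\begin{equation*}
p[x||\mathcal{Y}\otimes\mathcal{L}] \;=\; p[x_a||\mathcal{Y}\otimes\mathcal{L}]\times p[x_b||\mathcal{Y}\otimes\mathcal{L}] \;=\; p[x_a||\mathcal{Y}_a\otimes\mathcal{L}]\times p[x_b||\mathcal{Y}_b\otimes\mathcal{L}],
\end{equation*}
where the first equality is outcome independence and the second is parameter independence applied to each factor.

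For the harder direction (locality implies parameter and outcome independence), the key step is marginalization over $X_b$. Since $X_b$ is finite, $\sum_{x_b\in X_b}p[(x_a,x_b)||\mathcal{Y}\otimes\mathcal{L}]=p[x_a||\mathcal{Y}\otimes\mathcal{L}]$ $p$-a.s., and likewise $\sum_{x_b\in X_b}p[x_b||\mathcal{Y}_b\otimes\mathcal{L}]=1$ $p$-a.s. Summing the locality equation over $x_b\in X_b$ therefore yields
\begin{equation*}
p[x_a||\mathcal{Y}\otimes\mathcal{L}] \;=\; p[x_a||\mathcal{Y}_a\otimes\mathcal{L}]\times \sum_{x_b\in X_b}p[x_b||\mathcal{Y}_b\otimes\mathcal{L}] \;=\; p[x_a||\mathcal{Y}_a\otimes\mathcal{L}],
\end{equation*}
which is parameter independence for Alice; symmetrically for Bob. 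Substituting these two identities back into the right-hand side of the locality equation converts it directly into the outcome independence equation.

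The main (minor) obstacle is just ensuring the measure-theoretic bookkeeping is clean: each of the finitely many conditional probability equalities holds only $p$-almost surely, so one should implicitly intersect the finitely many full-measure sets on which the hypotheses hold before summing. Finiteness of $X$ makes this automatic, and it is precisely the place where the standing assumption on finite outcome spaces is used.
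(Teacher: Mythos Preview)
Your proof is correct and follows essentially the same route as the paper: in both directions you chain the definitions, and for the harder direction you sum the locality equation over $x_b\in X_b$ (using $\sum_{x_b} p[x_b||\mathcal{Y}_b\otimes\mathcal{L}]=1$) to obtain parameter independence, then substitute back to recover outcome independence. Your explicit remark about intersecting finitely many full-measure sets is a nice clarification that the paper leaves implicit.
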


\begin{proof}
It is easily seen from the definitions that if $p$ satisfies parameter independence and outcome independence, then $p$ satisfies locality.

Suppose that $p$ satisfies locality.  We have
\begin{equation*}
\{x_{a}\}\times X_{b} = \bigcup_{x_{b}\in X_{b}}\{(x_{a}, x_{b})\},
\end{equation*}
so
\begin{equation*}
p[x_{a}||\mathcal{Y}\otimes \mathcal{L}] = p[\{x_{a}\}\times X_{b}||\mathcal{Y}\otimes\mathcal{L}] = \sum_{x_{b}\in X_{b}}p[x_{a}, x_{b}||\mathcal{Y}\otimes \mathcal{L}] =
\end{equation*}
\begin{equation*}
\sum_{x_{b}\in X_{b}}(p[x_{a}||\mathcal{Y}_{a}\otimes\mathcal{L}]\times p[x_{b}||\mathcal{Y}_{b}\otimes\mathcal{L}]) =
\end{equation*}
\begin{equation*}
p[x_{a}||\mathcal{Y}_{a}\otimes\mathcal{L}]\times\sum_{x_{b}\in X_{b}}p[x_{b}||\mathcal{Y}_{b}\otimes\mathcal{L}] = p[x _{a}||\mathcal{Y}_{a}\otimes\mathcal{L}]\times 1 = p[x_{a}||\mathcal{Y}_{a}\otimes \mathcal{L}].
\end{equation*}
Similarly,
\begin{equation*}
p[x_{b}||\mathcal{Y}\otimes\mathcal{L}] = p[x_{b}||\mathcal{Y}_{b}\otimes\mathcal{L}].
\end{equation*}
It follows that $p$ satisfies parameter independence.

Again, supposing that $p$ satisfies locality, we have
\begin{equation*}
p[x_{a}, x_{b}||\mathcal{Y}\otimes\mathcal{L}] = p[x _{a}||\mathcal{Y}_{a}\otimes\mathcal{L}]\times p[x_{b}||\mathcal{Y}_{b}\otimes \mathcal{L}],
\end{equation*}
and hence
\begin{equation*}
p[x_{a}, x_{b}||\mathcal{Y}\otimes\mathcal{L}] = p[x _{a}||\mathcal{Y}\otimes \mathcal{L}]\times p[ x _{b}||\mathcal{Y}\otimes \mathcal{L}],
\end{equation*}
so $p$ satisfies outcome independence.
\end{proof}
\medskip

We immediately get a characterization of locality in terms of fiber products.

\begin{corollary}
\label{c-local}  $p$ satisfies locality if and only if
\begin{equation*}
p=p_{a}\otimes _{Y \times \Lambda}p_{b},\quad p_{a}=q_{a}\otimes_{Y_{a}\times \Lambda}r,\quad p_{b}=q_{b}\otimes _{Y_{b}\times\Lambda}r.
\end{equation*}
\end{corollary}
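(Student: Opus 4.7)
The plan is essentially to assemble three results already in hand: Proposition \ref{p-param-outcome}, Corollary \ref{c-fiber-parameter}, and Corollary \ref{c-fiber-outcome}. Concretely, I would first invoke Proposition \ref{p-param-outcome} to replace the single condition ``locality'' by the conjunction ``parameter independence and outcome independence.'' This reduces the target biconditional to showing that this conjunction is equivalent to the three displayed fiber product identities.

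Next I would handle each half of the conjunction separately. For parameter independence, Corollary \ref{c-fiber-parameter} gives directly the equivalence with $p_{a}=q_{a}\otimes_{Y_{a}\times \Lambda}r$ and $p_{b}=q_{b}\otimes_{Y_{b}\times\Lambda}r$. For outcome independence, Corollary \ref{c-fiber-outcome} gives the equivalence with $p = p_{a}\otimes_{Y\times\Lambda}p_{b}$. Combining these two equivalences componentwise with the equivalence from Proposition \ref{p-param-outcome} yields the claimed characterization.

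There is no genuine obstacle here; the only thing to notice is that the three fiber product equations are well-posed, i.e., the marginal compatibility hypotheses from Definition \ref{d-fiber} are automatically met: $p_a$ and $p_b$ are both marginals of $p$ and agree on $Y\times\Lambda$ (both give $r$), while $q_a$ and $r$ agree on $Y_a\times\Lambda$, and similarly for $q_b$ and $r$ on $Y_b\times\Lambda$. Once this bookkeeping is acknowledged, the proof collapses to a one-line citation of the three prior results.
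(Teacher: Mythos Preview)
Your proposal is correct and matches the paper's own proof exactly: the paper simply cites Proposition~\ref{p-param-outcome} and Corollaries~\ref{c-fiber-parameter} and~\ref{c-fiber-outcome}. Your additional remark about marginal compatibility is a helpful (if not strictly necessary) piece of bookkeeping, but otherwise there is nothing to add.
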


\begin{proof}
By Proposition \ref{p-param-outcome} and Corollaries \ref{c-fiber-parameter} and \ref{c-fiber-outcome}.
\end{proof}

\begin{definition}
The \hv model $p$ satisfies $\lambda$\textbf{-independence} if for every event $L\in\mathcal{L}$,
\begin{equation*}
p[L||\mathcal{Y}]_{y} = p(L).
\end{equation*}
\end{definition}

\begin{remark}
We observe:
\begin{enumerate}
\item[(i)] The $\lambda$-independence property for $p$ depends only on $r$.
\item[(ii)] Any \hv model $p$ such that $\Lambda$ is a singleton satisfies $\lambda$-independence.
\end{enumerate}
\end{remark}

By Remark \ref{r-product}, we have:

\begin{lemma}
\label{l-indep}
The following are equivalent:

\begin{enumerate}
\item[(i)]  $p$ satisfies $\lambda$-independence.

\item[(ii)] The measure $r$ is the product $p_Y \otimes p_\Lambda$.

\item[(iii)] The $\sigma$-algebras $\mathcal{Y}$ and $\mathcal{L}$ are independent with respect to $p$, i.e.,
\begin{equation*}
p(K\times L) = p(K)\times p(L)
\end{equation*}
for every $K\in\mathcal{Y},L\in\mathcal{L}$.
\end{enumerate}
\end{lemma}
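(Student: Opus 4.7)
The plan is to apply Remark \ref{r-product} to the measure $r = {\rm marg}_{Y\times\Lambda}p$, viewed as a probability measure on the product space $(Y,\mathcal{Y})\otimes(\Lambda,\mathcal{L})$. First I would check that $r$ is a common extension of $p_Y$ and $p_\Lambda$: its marginals on $Y$ and $\Lambda$ are the same as those of $p$, so ${\rm marg}_Y r = p_Y$ and ${\rm marg}_\Lambda r = p_\Lambda$. Remark \ref{r-product} then gives the equivalence of three conditions for $r$: namely, (a) $r = p_Y \otimes p_\Lambda$; (b) $r(K\times L) = r(K)\times r(L)$ for all $K\in\mathcal{Y}$, $L\in\mathcal{L}$; and (c) $r[L||\mathcal{Y}]_y = r(L)$ for all $L\in\mathcal{L}$, $r$-almost surely.

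Now I would match these three to (ii), (iii), (i) of the lemma. Condition (a) is literally (ii). For (b)$\Leftrightarrow$(iii): under the conventions introduced in Section \ref{s-preliminaries}, for $K\in\mathcal{Y}$ and $L\in\mathcal{L}$ we have $p(K\times L)=p(X\times K\times L)=r(K\times L)$, $p(K)=r(K\times \Lambda)=r(K)$, and $p(L)=r(Y\times L)=r(L)$, so (b) and (iii) are identical statements. For (c)$\Leftrightarrow$(i): by Corollary \ref{c-marg} (with $Z = \Lambda$ playing the role of $X$, and with the conditioning $\sigma$-algebra being $\mathcal{Y}$), $p[L||\mathcal{Y}] = r[L||\mathcal{Y}]$ $r$-almost surely, and again $r(L)=p(L)$ by the conventions; hence (c) is exactly the assertion that $p[L||\mathcal{Y}]_y = p(L)$ for every $L\in\mathcal{L}$, which is (i).

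I do not anticipate any real obstacle here; the proof is purely bookkeeping, translating the three conditions of Remark \ref{r-product} for the marginal $r$ into the three stated conditions for $p$. The only point that requires care is the consistent use of the Section \ref{s-preliminaries} conventions (so that $p(K)$, $p(L)$, $p(K\times L)$ all reduce to the corresponding values of $r$), together with the application of Corollary \ref{c-marg} to identify $p[L||\mathcal{Y}]$ with $r[L||\mathcal{Y}]$.
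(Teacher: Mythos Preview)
Your proposal is correct and is exactly the approach the paper takes: the paper simply writes ``By Remark \ref{r-product}'' before stating the lemma, so the intended proof is precisely the translation you spell out, applying Remark \ref{r-product} to $r={\rm marg}_{Y\times\Lambda}p$ and using the Section \ref{s-preliminaries} conventions (and Corollary \ref{c-marg}) to identify the three conditions for $r$ with (ii), (iii), (i) for $p$.
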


The distinction between strong and weak determinism in the next two definitions is from Brandenburger and Yanofsky \cite[2008]{brandenburger-yanofsky08}.  Strong determinism is the notion discussed in the Introduction.

\begin{definition}
The \hv model $p$ satisfies \textbf{strong determinism} if for each $x_{a}\in X_{a}$ we have
\begin{equation*}
p[x _{a}||\mathcal{Y}_{a}\otimes \mathcal{L}]_{(y _{a},\lambda)}\in\{0,1\}.
\end{equation*}
\end{definition}

\noindent This says that the set $Y_{a}\times\Lambda$ can be partitioned into sets $\{A_{x_{a}}: x_{a}\in X _{a}\}$ such that $p[x_{a}||A_{x _{a}}] = 1$ for each $x _{a}\in X_{a}$.

\begin{definition}
The \hv model $p$ satisfies \textbf{weak determinism} if for each $x\in X$ we have
\begin{equation*}
p[ x||\mathcal{Y}\otimes \mathcal{L}]_{(y ,\lambda)}\in\{0,1\}.
\end{equation*}
\end{definition}

\noindent This says that the set $Y\times \Lambda$ can be partitioned into sets $\{A_{x}:x\in X\}$ such that $p[x||A_{x}] = 1$ for each $x\in X$.

\begin{lemma}
\label{l-weak-determ} The following are equivalent:

\begin{enumerate}
\item[(i)]  $p$ satisfies weak determinism.

\item[(ii)] For each $x_{a}\in X_{a}$ we have
\begin{equation*}
p[x _{a}||\mathcal{Y}\otimes \mathcal{L}]_{(y ,\lambda )}\in \{0,1\}.
\end{equation*}
\end{enumerate}
\end{lemma}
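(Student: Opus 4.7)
The plan is to leverage the finiteness of $X_a$ and $X_b$ together with finite additivity of conditional probability. Following the paper's Alice/Bob convention, I read (ii) as the conjunction of the stated Alice version and its Bob counterpart $p[x_b||\mathcal{Y}\otimes\mathcal{L}]\in\{0,1\}$ for all $x_b\in X_b$; this reading is in fact forced, since an easy example with $|X_a|=|X_b|=2$ (put mass $1/2$ on each of $(0,0)$ and $(0,1)$ conditional on some $(y,\lambda)$) shows that Alice's outcome being a.s.\ deterministic does not by itself force Bob's to be, so the literal one-sided reading of (ii) would not imply (i).

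For (i) $\Rightarrow$ (ii), I would decompose $\{x_a\}\times X_b=\bigcup_{x_b\in X_b}\{(x_a,x_b)\}$ as a finite disjoint union and invoke finite additivity of $p[\cdot\,||\mathcal{Y}\otimes\mathcal{L}]$, in the same style as the manipulation in the proof of Proposition \ref{p-param-outcome}, to obtain $p[x_a||\mathcal{Y}\otimes\mathcal{L}]=\sum_{x_b\in X_b}p[(x_a,x_b)||\mathcal{Y}\otimes\mathcal{L}]$ $p$-a.s. Weak determinism forces each summand into $\{0,1\}$; since $\sum_{x\in X}p[x||\mathcal{Y}\otimes\mathcal{L}]=1$ a.s., on a full-measure set exactly one pair $(x_a^*,x_b^*)$ receives mass $1$ and all others vanish, so each Alice-marginal $p[x_a||\mathcal{Y}\otimes\mathcal{L}]$ lies in $\{0,1\}$. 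The Bob counterpart follows by symmetry.

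For (ii) $\Rightarrow$ (i), the Alice version of (ii) together with $\sum_{x_a}p[x_a||\mathcal{Y}\otimes\mathcal{L}]=1$ produces a full-measure set $N_a^c$ on which there is a unique $x_a^*(y,\lambda)$ with $p[x_a^*||\mathcal{Y}\otimes\mathcal{L}]_{(y,\lambda)}=1$; the Bob version likewise produces a unique $x_b^*(y,\lambda)$ on a full-measure set $N_b^c$. On $N_a^c\cap N_b^c$, additivity gives $p[\{x_a^*\}\times X_b||\mathcal{Y}\otimes\mathcal{L}]=1=p[X_a\times\{x_b^*\}||\mathcal{Y}\otimes\mathcal{L}]$, so the two complements have conditional probability $0$, and finite subadditivity forces the intersection $\{(x_a^*,x_b^*)\}$ to have conditional probability $1$. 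The total-mass constraint then drives every other $x\in X$ to conditional probability $0$, yielding (i). The only step requiring any care is the bookkeeping of exceptional null sets when one intersects the (finitely many) full-measure sets produced by the Alice and Bob versions of (ii) and by the additivity identities, which is routine rather than an obstacle.
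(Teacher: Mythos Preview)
Your proof is correct and follows the same approach as the paper's, just spelled out in more detail: the paper declares (ii) $\Rightarrow$ (i) ``clear'' and for (i) $\Rightarrow$ (ii) simply notes that weak determinism gives, $p$-a.s., a unique $x=(x_a,x_b)$ with $p[x||\mathcal{Y}\otimes\mathcal{L}]=1$, from which $p[x_a||\mathcal{Y}\otimes\mathcal{L}]\in\{0,1\}$ follows by the same finite additivity you invoke. Your explicit argument that (ii) must be read two-sidedly---via the counterexample and the paper's standing Alice/Bob convention---is a useful clarification that the paper leaves implicit.
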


\begin{proof}
It is clear that (ii) implies (i).

Assume (i).  Then for $p$-almost all $(y ,\lambda)$ there is an $x\in X$ such that $p[x||\mathcal{Y}\otimes\mathcal{L}]_{(y ,\lambda)}=1$, and hence
\begin{equation*}
p[x _{a}||\mathcal{Y}\otimes\mathcal{L}]_{(y ,\lambda)} = 1
\end{equation*}
for each $x_{a}\in X_{a}$.  Therefore (ii) holds.
\end{proof}

\begin{proposition}
\label{p-determ} If $p$ satisfies strong determinism then it satisfies weak determinism.
\end{proposition}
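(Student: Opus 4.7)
The plan is to reduce the claim to a direct application of Lemma \ref{l-01}, with Lemma \ref{l-weak-determ} packaging the conclusion. By Lemma \ref{l-weak-determ}, weak determinism is equivalent to the condition that $p[x_{a}\|\mathcal{Y}\otimes\mathcal{L}]_{(y,\lambda)}\in\{0,1\}$ $p$-almost surely for every $x_{a}\in X_{a}$. So it suffices to pass from the conditioning on $\mathcal{Y}_{a}\otimes\mathcal{L}$ (which strong determinism gives us) to conditioning on the richer $\sigma$-algebra $\mathcal{Y}\otimes\mathcal{L}=\mathcal{Y}_{a}\otimes\mathcal{Y}_{b}\otimes\mathcal{L}$, and Lemma \ref{l-01} is precisely a tool for doing that in the $\{0,1\}$ case.

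Concretely, fix $x_{a}\in X_{a}$ and let $J=\{x_{a}\}\times X_{b}\in\mathcal{X}$. Regard $p$ as a measure on the product $\mathbf{X}\otimes\mathbf{Y}_{b}\otimes(\mathbf{Y}_{a}\otimes\boldsymbol{\Lambda})$, so that Lemma \ref{l-01} applies with the role of the lemma's $\mathcal{Y}$ played by $\mathcal{Y}_{b}$ and the role of the lemma's $\mathcal{Z}$ played by $\mathcal{Y}_{a}\otimes\mathcal{L}$. Strong determinism (for Alice) is exactly the hypothesis of Lemma \ref{l-01}, namely $p[J\|\mathcal{Y}_{a}\otimes\mathcal{L}]\in\{0,1\}$ $p$-almost surely. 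The conclusion of Lemma \ref{l-01} then yields
\begin{equation*}
p[x_{a}\|\mathcal{Y}\otimes\mathcal{L}]_{(y,\lambda)}=p[x_{a}\|\mathcal{Y}_{a}\otimes\mathcal{L}]_{(y_{a},\lambda)}\in\{0,1\}
\end{equation*}
$p$-almost surely. Since this holds for every $x_{a}\in X_{a}$, Lemma \ref{l-weak-determ}(ii)$\Rightarrow$(i) delivers weak determinism.

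There is essentially no obstacle here; the only thing to be careful about is the bookkeeping that identifies the three-factor product $\mathbf{X}\otimes\mathbf{Y}_{b}\otimes(\mathbf{Y}_{a}\otimes\boldsymbol{\Lambda})$ with the space $\Omega$ on which $p$ lives, so that Lemma \ref{l-01} may be invoked verbatim. Note that we do not even need to separately invoke the Bob-counterpart of strong determinism, because Lemma \ref{l-weak-determ} already tells us that controlling the marginal conditional probabilities $p[x_{a}\|\mathcal{Y}\otimes\mathcal{L}]$ for a single player is enough to force $p[x\|\mathcal{Y}\otimes\mathcal{L}]\in\{0,1\}$ for the joint outcome $x=(x_{a},x_{b})$.
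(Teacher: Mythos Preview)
Your proof is correct and essentially identical to the paper's: both apply Lemma \ref{l-01} (with $\mathcal{Z}=\mathcal{Y}_{a}\otimes\mathcal{L}$ and the lemma's $\mathcal{Y}$ playing the role of $\mathcal{Y}_{b}$) to upgrade $p[x_{a}\|\mathcal{Y}_{a}\otimes\mathcal{L}]\in\{0,1\}$ to $p[x_{a}\|\mathcal{Y}\otimes\mathcal{L}]\in\{0,1\}$, then conclude via Lemma \ref{l-weak-determ}(ii)$\Rightarrow$(i). One small correction to your closing remark: under the paper's standing convention that every statement written for Alice has an implicit counterpart for Bob, condition (ii) of Lemma \ref{l-weak-determ} includes the $b$-version as well, and indeed the Alice-side condition alone does \emph{not} force $p[x\|\mathcal{Y}\otimes\mathcal{L}]\in\{0,1\}$ (if $p[x_{a}\|\cdot]=1$ the mass could still be spread over several $x_{b}$); so the Bob-counterpart of strong determinism is needed, though it is handled by the symmetric argument and is suppressed in the paper's proof just as in yours.
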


\begin{proof}
Suppose $p$ satisfies strong determinism.  By Lemma \ref{l-01}, we have
\begin{equation*}
p[x _a||\mathcal{Y}_a\otimes\mathcal{L}] = p[x _a||\mathcal{Y}\otimes\mathcal{L}]
\end{equation*}
$p$-almost surely, and therefore
\begin{equation*}
p[x _a||\mathcal{Y}\otimes\mathcal{L}]\in \{0,1\},
\end{equation*}
so $p$ satisfies weak determinism by Lemma \ref{l-weak-determ}(ii).
\end{proof}

\begin{proposition}
\label{l-determ-outcome} If $p$ satisfies weak determinism then it satisfies outcome independence.
\end{proposition}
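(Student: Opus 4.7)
The plan is to exploit weak determinism to force both sides of the outcome-independence equation to be $\{0,1\}$-valued almost surely, and then to match them pointwise on a full-measure set of $(y,\lambda)$.

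First, I would extract from weak determinism the fact that each of the three conditionals appearing in the desired identity is $\{0,1\}$-valued almost surely. The premise gives this directly for $p[x||\mathcal{Y}\otimes\mathcal{L}]$; Lemma \ref{l-weak-determ} promotes it to $p[x_a||\mathcal{Y}\otimes\mathcal{L}]\in\{0,1\}$, and the symmetric argument (with $a$ and $b$ swapped) gives $p[x_b||\mathcal{Y}\otimes\mathcal{L}]\in\{0,1\}$. Since $X$, $X_a$, $X_b$ are all finite, I can intersect the finitely many associated full-measure sets and find a single set $N\subseteq Y\times\Lambda$ of full $p$-measure on which, simultaneously for every $x\in X$, $x_a\in X_a$, and $x_b\in X_b$, all three conditionals lie in $\{0,1\}$ and satisfy the usual unit-sum identities $\sum_{x\in X}p[x||\mathcal{Y}\otimes\mathcal{L}]=\sum_{x_a\in X_a}p[x_a||\mathcal{Y}\otimes\mathcal{L}]=\sum_{x_b\in X_b}p[x_b||\mathcal{Y}\otimes\mathcal{L}]=1$.

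Next, I would fix $(y,\lambda)\in N$ and read off the unique outcome selected by weak determinism. Because $p[\cdot||\mathcal{Y}\otimes\mathcal{L}]_{(y,\lambda)}$ is $\{0,1\}$-valued and sums to $1$ over the finite set $X$, there is a unique $x^{*}=(x_{a}^{*},x_{b}^{*})\in X$ with $p[x^{*}||\mathcal{Y}\otimes\mathcal{L}]_{(y,\lambda)}=1$. Using the disjoint decomposition $\{x_{a}\}\times X_{b}=\bigcup_{x_{b}\in X_{b}}\{(x_{a},x_{b})\}$ (exactly as in the proof of Proposition \ref{p-param-outcome}), additivity of conditional probability gives $p[x_{a}||\mathcal{Y}\otimes\mathcal{L}]_{(y,\lambda)}=\sum_{x_{b}\in X_{b}}p[(x_{a},x_{b})||\mathcal{Y}\otimes\mathcal{L}]_{(y,\lambda)}$, which equals $1$ iff $x_{a}=x_{a}^{*}$ and equals $0$ otherwise; symmetrically for $x_{b}$.

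Finally, I would verify the outcome-independence identity on $N$ case by case. If $x=x^{*}$, the left-hand side is $1$ and the right-hand side is $1\times 1=1$. If $x\neq x^{*}$, at least one of $x_{a}\neq x_{a}^{*}$ or $x_{b}\neq x_{b}^{*}$ holds, so the left-hand side is $0$ and the corresponding factor on the right vanishes, making the product $0$ as well. Hence the identity holds on $N$, i.e., $p$-almost surely, which is what outcome independence demands. The only delicate point is the bookkeeping of null sets at the start, but this is routine because all indexing sets ($X$, $X_{a}$, $X_{b}$) are finite, so no uncountable intersection of full-measure sets is ever needed.
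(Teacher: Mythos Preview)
Your proof is correct and follows essentially the same route as the paper: invoke Lemma~\ref{l-weak-determ} to get $\{0,1\}$-valued marginal conditionals, then verify the outcome-independence identity pointwise on a full-measure set. The paper's version is much terser---it records only that $p[x_a\|\mathcal{Y}\otimes\mathcal{L}]\in\{0,1\}$ and then asserts the product identity directly---whereas you spell out the unique-$x^{*}$ case analysis and the null-set bookkeeping in full.
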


\begin{proof}
Suppose $p$ satisfies weak determinism.  By Lemma \ref{l-weak-determ}, we have
\begin{equation*}
p[x _{a}||\mathcal{Y}\otimes \mathcal{L}]\in\{0,1\}.
\end{equation*}
Therefore
\begin{equation*}
p[x||\mathcal{Y}\otimes\mathcal{L}] = p[x _{a}||\mathcal{Y}\otimes\mathcal{L}]\times p[x_{b}||\mathcal{Y}\otimes\mathcal{L}],
\end{equation*}
as required.
\end{proof}

\begin{proposition}
\label{l-strong-det-param}  $p$ satisfies strong determinism if and only if it satisfies weak determinism and parameter independence.
\end{proposition}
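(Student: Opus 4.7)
The plan is to show both directions by combining the two earlier tools we have in hand: Lemma \ref{l-01} (conditioning on more cannot change a $\{0,1\}$-valued conditional) and Lemma \ref{l-weak-determ} (weak determinism passes to the single-party variable $x_a$).

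For the forward implication, assume strong determinism. Weak determinism is already Proposition \ref{p-determ}, so the only thing to prove is parameter independence. Strong determinism tells us that $p[x_a||\mathcal{Y}_a\otimes\mathcal{L}]\in\{0,1\}$ $p$-almost surely. I would then apply Lemma \ref{l-01} with the role of the $\mathcal{Z}$ there played by $\mathcal{Y}_a\otimes\mathcal{L}$ and the extra factor played by $\mathcal{Y}_b$, to conclude
\begin{equation*}
p[x_a||\mathcal{Y}\otimes\mathcal{L}]=p[x_a||\mathcal{Y}_b\otimes\mathcal{Y}_a\otimes\mathcal{L}]=p[x_a||\mathcal{Y}_a\otimes\mathcal{L}]
\end{equation*}
$p$-almost surely, which is exactly parameter independence for Alice; the symmetric argument works for Bob.

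For the reverse implication, assume weak determinism and parameter independence. By Lemma \ref{l-weak-determ}(ii), weak determinism yields $p[x_a||\mathcal{Y}\otimes\mathcal{L}]\in\{0,1\}$ $p$-almost surely. Parameter independence then gives
\begin{equation*}
p[x_a||\mathcal{Y}_a\otimes\mathcal{L}]=p[x_a||\mathcal{Y}\otimes\mathcal{L}]\in\{0,1\}
\end{equation*}
$p$-almost surely, and symmetrically for $x_b$, so strong determinism holds.

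There isn't a substantive obstacle here; the work has all been absorbed into the two lemmas above. The only thing to be careful about is the bookkeeping in the forward direction --- making sure that the sub-$\sigma$-algebra being added in the application of Lemma \ref{l-01} is indeed $\mathcal{Y}_b$ (so that $\mathcal{Y}_a\otimes\mathcal{L}$ together with $\mathcal{Y}_b$ equals $\mathcal{Y}\otimes\mathcal{L}$), which is immediate from $\mathcal{Y}=\mathcal{Y}_a\otimes\mathcal{Y}_b$.
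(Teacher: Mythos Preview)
Your proof is correct and follows essentially the same route as the paper: both directions use Lemma~\ref{l-01} (to pass between conditioning on $\mathcal{Y}_a\otimes\mathcal{L}$ and on $\mathcal{Y}\otimes\mathcal{L}$ when the conditional is $\{0,1\}$-valued), Proposition~\ref{p-determ} for the forward half, and Lemma~\ref{l-weak-determ}(ii) for the reverse half. Your extra sentence identifying $\mathcal{Y}_b$ as the added factor in the application of Lemma~\ref{l-01} is a helpful clarification but not a departure from the paper's argument.
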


\begin{proof}
Suppose $p$ satisfies strong determinism.  By Lemma \ref{l-01},
\begin{equation*}
p[x _{a}||\mathcal{Y}_{a}\otimes\mathcal{L}] = p[x _{a}||\mathcal{Y}\otimes\mathcal{L}],
\end{equation*}
so $p$ satisfies parameter independence.  By Proposition \ref{p-determ}, $p$ satisfies weak determinism.

For the converse, suppose $p$ satisfies weak determinism and parameter independence.  Fix  $x_{a}\in X_{a}$.  By weak determinism and Lemma \ref{l-weak-determ},
\begin{equation*}
p[x _{a}||\mathcal{Y}\otimes\mathcal{L}]_{(y ,\lambda )}\in\{0,1\}.
\end{equation*}
By parameter independence,
\begin{equation*}
p[x_{a}||\mathcal{Y}\otimes\mathcal{L}] = p[x_{a}||\mathcal{Y}_{a}\otimes\mathcal{L}].
\end{equation*}
Therefore
\begin{equation*}
p[x _{a}||\mathcal{Y}_{a}\otimes\mathcal{L}]_{(y ,\lambda )}\in\{0,1\},
\end{equation*}
so $p$ satisfies strong determinism.
\end{proof}
\medskip

\begin{corollary}  \label{p-strong-det-implies-locality}
 $p$ satisfies strong determinism if and only if it satisfies weak determinism and locality.
\end{corollary}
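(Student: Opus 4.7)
The plan is to deduce this corollary directly from the three preceding results already in the excerpt: Proposition \ref{l-strong-det-param} (strong determinism $\Leftrightarrow$ weak determinism $+$ parameter independence), Proposition \ref{l-determ-outcome} (weak determinism $\Rightarrow$ outcome independence), and Proposition \ref{p-param-outcome} (locality $\Leftrightarrow$ parameter independence $+$ outcome independence). No new measure-theoretic work should be needed; the corollary is a repackaging that trades parameter independence for the stronger conjunction of locality with weak determinism, using the fact that weak determinism supplies outcome independence for free.

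For the forward direction, I would assume strong determinism. Proposition \ref{l-strong-det-param} immediately gives weak determinism and parameter independence. Then Proposition \ref{l-determ-outcome} applied to weak determinism yields outcome independence, and Proposition \ref{p-param-outcome} combines the two independence properties into locality.

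For the converse, I would assume weak determinism and locality. Proposition \ref{p-param-outcome} extracts parameter independence from locality, and then Proposition \ref{l-strong-det-param} converts weak determinism plus parameter independence back into strong determinism.

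There is no real obstacle: every implication invoked is already established. The only thing worth being careful about is to observe that outcome independence is automatically carried along by weak determinism, so that in the forward direction locality is genuinely obtained (not just parameter independence), and to note that in the converse direction one needs only the parameter-independence half of locality, so no circularity arises with Proposition \ref{l-determ-outcome}.
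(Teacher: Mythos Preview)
Your proposal is correct and matches the paper's own proof, which simply cites Propositions \ref{p-param-outcome}, \ref{l-determ-outcome}, and \ref{l-strong-det-param}. You have merely spelled out the logical chaining in more detail than the paper does.
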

\begin{proof}
By Propositions \ref{p-param-outcome}, \ref{l-determ-outcome}, and \ref{l-strong-det-param}.
\end{proof}
\bigskip
\bigskip

We can summarize the properties we have considered and the relationships among them in the following Venn diagram.

\vbox{\hsize=\textwidth
\hbox to \textwidth{\hss
\includegraphics[scale=0.55]{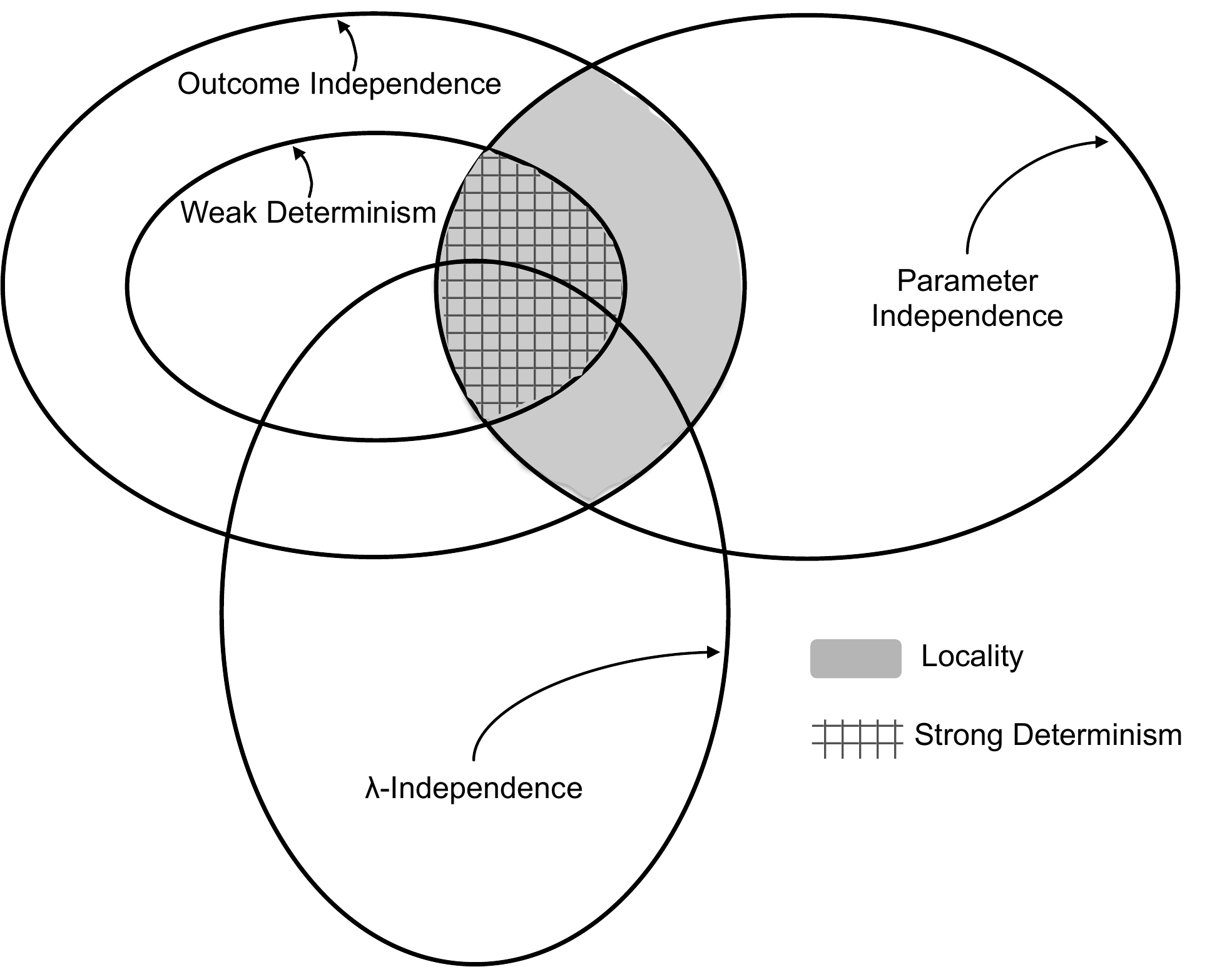}
\hss}
\vskip-20pt\relax
}
\medskip

\section{Determinization Theorems}
\label {s-firsttheorem}
Given an \hv model $p$, we call the probability space $(\Lambda ,\mathcal{L}, p_{\Lambda})$ the \textbf{\hv space of $p$}.

\begin{remark} \label{r-singleton}
Every empirical model $e$ can be realized by an \hv model $p$ where $p$ satisfies $\lambda$-independence
and the \hv space of $p$ has only one element.
\end{remark}

\begin{proof}  For every probability space $(\Lambda,\mathcal{L},p_\Lambda)$, the product measure $p=e\otimes p_\Lambda$ is an
\hv model that realizes $e$ and satisfies $\lambda$-independence.  In particular, we can take $\Lambda$ to be a one-element
set and take $(\Lambda,\mathcal{L},p_\Lambda)$ to be the trivial probability measure.
\end{proof}.
\medskip

We now state and prove our determinization results.

\begin{theorem}
\label{t-firsttheorem}
Every empirical model $e$ can be realized by an \hv model $p$ where $p$ satisfies strong determinism and the \hv space of $p$ is finite.
\end{theorem}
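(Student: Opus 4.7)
The plan is to let the hidden-variable space be a copy of the outcome space itself and to couple $\lambda$ to $x$ so that they are equal $p$-almost surely. Since $X=X_{a}\times X_{b}$ is finite, this automatically gives a finite \hv space, and since $x_{a}$ will be a deterministic function of (one coordinate of) $\lambda$, strong determinism will be immediate.

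More precisely, I set $(\Lambda,\mathcal{L}) = (X,\mathcal{X}) = (X_{a},\mathcal{X}_{a})\otimes(X_{b},\mathcal{X}_{b})$, and write a typical element of $\Lambda$ as $\lambda=(\lambda_{a},\lambda_{b})$. I then define $p$ on $\Omega$ to be the pushforward of $e$ under the diagonal-in-$X$ map
\begin{equation*}
\phi : X\times Y \longrightarrow X\times Y\times \Lambda,\qquad \phi(x,y)=(x,y,x).
\end{equation*}
Concretely, for $J\in\mathcal{X}$, $K\in\mathcal{Y}$, $L\in\mathcal{L}$,
\begin{equation*}
p(J\times K\times L)=e\bigl((J\cap L)\times K\bigr).
\end{equation*}
Taking $L=X$ shows that ${\rm marg}_{\Psi}\,p=e$, so $p$ realizes $e$.

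It remains to verify strong determinism, i.e., that $p[x_{a}\| \mathcal{Y}_{a}\otimes \mathcal{L}]_{(y_{a},\lambda)}\in\{0,1\}$ for every $x_{a}\in X_{a}$ (and symmetrically for $b$). By construction, $x_{a}=\lambda_{a}$ holds $p$-almost surely, so the candidate version of the conditional probability is the $\{0,1\}$-valued function $f(y_{a},\lambda)=\mathbf{1}_{\{\lambda_{a}=x_{a}\}}$, which is visibly $\mathcal{Y}_{a}\otimes \mathcal{L}$-measurable. To confirm this is the right function via Lemma \ref{l-cond}, I check that for every $M\in \mathcal{Y}_{a}\otimes \mathcal{L}$,
\begin{equation*}
\int_{M}\mathbf{1}_{\{\lambda_{a}=x_{a}\}}\,dp = q_{a}\bigl(\{x_{a}\}\times M\bigr),
\end{equation*}
which follows because under $q_{a}={\rm marg}_{X_{a}\times Y_{a}\times \Lambda}\,p$ we have $x_{a}=\lambda_{a}$ a.s., so both sides equal $q_{a}(M\cap\{\lambda_{a}=x_{a}\})$. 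The analogous argument works for Bob, giving strong determinism.

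I do not anticipate a genuine obstacle: the essential content of the theorem is that an outcome is always available to serve as its own hidden variable once the \hv space is unconstrained, and finiteness of $X$ makes $\Lambda$ finite automatically. The only subtlety worth writing out carefully is the routine identification of the conditional-probability version via Lemma \ref{l-cond}, which (given finiteness of $X_{a}$ and the power-set $\sigma$-algebra $\mathcal{X}_{a}$) reduces to a direct set-theoretic computation as above.
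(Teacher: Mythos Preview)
Your construction is correct and yields exactly the same measure $p$ as the paper's proof: both take $\Lambda$ to be a copy of $X$ and force $\lambda=x$ almost surely. The paper, however, packages the construction through its fiber-product machinery: it defines the diagonal measure $d$ on $X\times\Lambda$ and then sets $p=d\otimes_{X}e$, whereas you describe the same $p$ directly as the pushforward of $e$ under $(x,y)\mapsto(x,y,x)$. For strong determinism the paper first observes $p[x_{a}\,\Vert\,\mathcal{L}]\in\{0,1\}$ and then invokes Lemma~\ref{l-01} to lift this to conditioning on $\mathcal{Y}_{a}\otimes\mathcal{L}$; you instead exhibit the version $\mathbf{1}_{\{\lambda_{a}=x_{a}\}}$ explicitly and verify it via Lemma~\ref{l-cond}. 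Your route is slightly more elementary and self-contained; the paper's route is chosen to showcase the fiber-product formalism that is the paper's advertised theme.
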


\begin{proof}
Let $s = {\rm marg}_{X}e$.  Build an \hv space $(\Lambda ,\mathcal{L}, s)$ where $\Lambda$ is a copy of $X$ and $\mathcal{L}$ is the power set of $X$.  Build a probability measure $d$ on $X \times \Lambda$ so that, for each $x \in X$ and $x^{\prime}\in \Lambda$,
\begin{equation*}
d(x, x^{\prime}) =
\begin{cases}
s(x) & \text{if $x = x^{\prime}$}, \\
0 & \text{otherwise}.
\end{cases}
\end{equation*}
\noindent Note that $d$ is an extension of $s$.

Let $p$ be the fiber product $p = d \otimes_{X} e$.  Then $p$ is realization-equivalent to $e$.  Since $\Lambda$ is a copy of the finite space $X$, $\Lambda$ is finite.  For each $x_a \in X_a$ and $x^{\prime} \in \Lambda$, we have
\begin{equation*}
p[x_{a} || x^{\prime}] = d[x_{a} || x^{\prime}] \in \{0, 1\}.
\end{equation*}
By Lemma \ref{l-01}, for each $x_a$ we have
\begin{equation*}
p[x_{a} || \mathcal{Y}_{a}\otimes \mathcal{L}]_{(y _{a}, x^{\prime})} \in \{0, 1\}
\end{equation*}
$p$-almost surely.  This shows that $p$ satisfies strong determinism.
\end{proof}

\begin{theorem}
\label{t-secondtheorem}
Given an \hv model $p$ satisfying locality and $\lambda$-independence, there is a realization-equivalent \hv model $\bar{p}$ that satisfies strong determinism and $\lambda$-independence.
\end{theorem}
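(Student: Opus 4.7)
The plan is to enlarge the hidden-variable space by two independent uniform $[0,1]$ random variables $U_a,U_b$, and to use the inverse-CDF trick to derandomize the (already marginally correct) conditional outcome distributions, exploiting locality to generate Alice's and Bob's outcomes independently given $\lambda$. Since $X_a$ and $X_b$ are finite, the inverse CDF yields a jointly measurable deterministic response function; making $U_a,U_b$ independent of $Y$ preserves $\lambda$-independence; and locality (Corollary \ref{c-local}) provides just the right factorization to match $e$ after averaging out the new h.v.s.

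Concretely, I would set $\bar\Lambda=\Lambda\times[0,1]\times[0,1]$ with $\bar p_\Lambda=p_\Lambda\otimes\mathrm{Leb}\otimes\mathrm{Leb}$. Writing $\mu_a(x_a;y_a,\lambda)=p_a[x_a||\mathcal{Y}_a\otimes\mathcal{L}]_{(y_a,\lambda)}$ and enumerating $X_a=\{x_{a,1},\dots,x_{a,n_a}\}$, define
\begin{equation*}
F_a(y_a,\lambda,u_a)=x_{a,i}\quad\text{iff}\quad\sum_{j<i}\mu_a(x_{a,j};y_a,\lambda)\le u_a<\sum_{j\le i}\mu_a(x_{a,j};y_a,\lambda),
\end{equation*}
and define $F_b$ analogously. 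Then $F_a$ is $\mathcal{Y}_a\otimes\mathcal{L}\otimes\mathcal{B}$-measurable, and for $U\sim\mathrm{Leb}$ the law of $F_a(y_a,\lambda,U)$ is $\mu_a(\cdot;y_a,\lambda)$. Take $\bar p$ on $\Omega=X\times Y\times\bar\Lambda$ to be the push-forward of $p_Y\otimes\bar p_\Lambda$ under
\begin{equation*}
(y,\lambda,u_a,u_b)\longmapsto\bigl(F_a(y_a,\lambda,u_a),\,F_b(y_b,\lambda,u_b),\,y,\,\lambda,\,u_a,\,u_b\bigr).
\end{equation*}

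The verification splits into three checks. First, ${\rm marg}_{Y\times\bar\Lambda}\bar p=p_Y\otimes\bar p_\Lambda$ by construction, so Lemma \ref{l-indep} gives $\lambda$-independence. Second, $x_a$ is a deterministic function of $(y_a,\lambda,u_a)\in Y_a\times\bar\Lambda$, so $\bar p[x_a||\mathcal{Y}_a\otimes\bar{\mathcal{L}}]\in\{0,1\}$, establishing strong determinism. Third, for realization of $e$, Fubini and the defining property of $F_a,F_b$ give
\begin{equation*}
\bar p[\{x\}||\mathcal{Y}]_y=\int \mu_a(x_a;y_a,\lambda)\,\mu_b(x_b;y_b,\lambda)\,dp_\Lambda(\lambda),
\end{equation*}
and locality (Proposition \ref{p-param-outcome} with Corollary \ref{c-marg}) collapses the integrand to $p[\{x\}||\mathcal{Y}\otimes\mathcal{L}]_{(y,\lambda)}$, after which $\lambda$-independence of $p$ reduces the integral to $p[\{x\}||\mathcal{Y}]_y$. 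Combined with $\bar p_Y=p_Y$, this gives ${\rm marg}_\Psi\bar p=e$.

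The main obstacle I expect is purely measure-theoretic bookkeeping: joint measurability of $F_a,F_b$, justifying Fubini in the last calculation, and invoking $\lambda$-independence to replace the conditional law of $\lambda$ given $y$ by $p_\Lambda$. None of this is conceptually subtle given the tools of Sections \ref{s-preliminaries} and \ref{s-properties}, but care is needed so that each equality is stated and used in its correct almost-sure sense.
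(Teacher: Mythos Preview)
Your proposal is correct and is essentially the same construction as the paper's: enlarge $\Lambda$ by two independent Lebesgue coordinates, use the inverse-CDF/interval partition of $[0,1]$ with lengths $p[x_a\|\mathcal{Y}_a\otimes\mathcal{L}]$ (your $F_a$ is exactly the paper's $I_a(x_a,y_a,\lambda)$), and verify strong determinism, $\lambda$-independence, and realization equivalence via Fubini plus locality. The only cosmetic difference is that you present $\bar p$ as a pushforward of $p_Y\otimes\bar p_\Lambda$ under the deterministic response map, whereas the paper assembles the same measure through the fiber-product formalism ($s_a$, $\bar p_a=s_a\otimes_{Y_a\times\Lambda\times[0,1]_a}\bar r$, $\bar p=\bar p_a\otimes_{Y\times\bar\Lambda}\bar p_b$) and in fact checks the slightly stronger fact that $\bar p$ extends $p$ on $X\times Y\times\Lambda$, not merely that it realizes $e$.
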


\begin{proof}
Suppose $p$ satisfies locality and $\lambda $-independence.  We will construct a new \hv model $\bar{p}$ whose \hv space $(\bar{\Lambda},\bar{\mathcal{L}},\bar{p}_{\bar{\Lambda}})$ will be the product of $(\Lambda ,\mathcal{L},p_{\Lambda})$ and the Lebesgue unit square
\begin{equation*}
([0,1]_{a},\mathcal{U}_{a},u_{a})\otimes ([0,1]_{b},\mathcal{U}_{b},u_{b}).
\end{equation*}
Here, $[0,1]_a$ is a copy of the real unit interval, $\mathcal{U}_a$ is the set of Borel subsets of $[0,1]_a$, and $u_a$ is Lebesgue measure on $\mathcal{U}_a$; similarly for $b$.

Let $X_{a}=\{x_{a}^{1},\ldots ,x_{a}^{A}\}$.  For each $y_{a}\in Y_{a}$ and $\lambda \in \Lambda$, partition $[0,1]_{a}$ into $A$ consecutive intervals
\begin{equation*}
I_{a}(x_{a}^{1},y _{a},\lambda), I_{a}(x_{a}^{2},y _{a},\lambda), \ldots, I_{a}(x_{a}^{A},y _{a},\lambda),
\end{equation*}
where, for each $x_a\in X_a$, $I_{a}(x_{a},y _{a},\lambda)$ has length
\begin{equation*}
u_{a}(I_{a}(x_{a},y _{a},\lambda )) = p[x_{a}||\mathcal{Y}_{a}\otimes \mathcal{L}]_{(y _{a},\lambda)}.
\end{equation*}
Note that the boundary point between the $i$th and $(i+1)$th intervals is the $(\mathcal{Y}_{a}\otimes \mathcal{L})$-measurable function
\begin{equation*}
\sum_{i=1}^{n}p[x_{a}^{i}||\mathcal{Y}_{a}\otimes \mathcal{L}]_{(y _{a},\lambda)}.
\end{equation*}
We carry out the same construction with $b$ in place of $a$.

Let $\bar{r} = r \otimes u_{a}\otimes u_{b}$.  Since $p$ satisfies $\lambda $-independence, $r = p_{Y} \otimes p_{\Lambda}$, and thus $\bar{r} = p_{Y}\otimes \bar{p}_{\bar{\Lambda}}$.  Let $s_{a}$ be the unique probability measure on
\begin{equation*}
(X_{a},\mathcal{X}_{a}) \otimes (Y_{a},\mathcal{Y}_{a}) \otimes (\Lambda,\mathcal{L}) \otimes ([0,1]_{a},\mathcal{U}_{a})
\end{equation*}
such that for each $x_{a}\in X_{a}$, $K_{a}\in \mathcal{Y}_{a}$, $L\in \mathcal{L}$, and $U_{a}\in \mathcal{U}_{a}$, we have
\begin{equation*}
s_{a}(\{x_{a}\} \times K_{a}\times L\times U_{a}) = \int_{K_{a}\times L\times U_{a}}1_{I_{a}(x_{a},y_{a},\lambda)}(\alpha)\,d\bar{r},
\end{equation*}
where we write $\alpha $ for a typical element of $[0,1]_{a}$.  Define $s_{b}$ in a similar way.  Now define $\bar{p}_{a}$, $\bar{p}_{b}$, and $\bar{p}$ as the fiber products
\begin{equation*}
\bar{p}_{a} = s_{a}\otimes _{Y_{a}\times \Lambda \times \lbrack 0,1]_{a}} \bar{r} \text{, \ } \bar{p}_{b} = s_{b} \otimes _{Y_{b}\times \Lambda \times \lbrack 0,1]_{b}} \bar{r} \text{, \ } \bar{p} = \bar{p}_{a}\otimes_{Y\times \bar{\Lambda }} \bar{p}_{b}.
\end{equation*}
We see that the \hv model $\bar{p}$ is a common extension of $s_{a}$, $s_{b}$, and $\bar{r}$.  It also satisfies $\lambda$-independence because $\bar{r} = p_{Y}\otimes \bar{p}_{\bar{\Lambda}}$.  By Lemma \ref{l-cond},
\begin{equation*}
s_{a}[x_{a}||\mathcal{Y}_{a}\otimes \mathcal{L}\otimes \mathcal{U}_{a}] = 1_{I_{a}(x_{a},y_{a},\lambda)}\in \{0,1\}.
\end{equation*}
By Lemma \ref{l-01},
\begin{equation*}
s_{a}[x_{a}||\mathcal{Y}_{a}\otimes \bar{\mathcal{L}}] \in \{0,1\}.
\end{equation*}
Similarly for $s_{b}$.  Therefore $\bar{p}$ satisfies strong determinism.

It remains to prove that $\bar{p}$ is an extension of $p$.  By Fubini's Theorem,
\begin{equation*}
s_{a}(\{x_{a}\} \times K_{a}\times L) = \int_{K_{a}\times L\times \lbrack 0,1]_{a}} 1_{I_{a}(x_{a},y_{a},\lambda)}(\alpha )\,d\bar{r} =
\end{equation*}
\begin{equation*}
\int_{K_{a}\times L} \int_{0}^{1} 1_{I_{a}(x_{a},y_{a},\lambda)}(\alpha)\,du_{a}\,dr = \int_{K_{a}\times L} u_{a}(I_{a}(x_{a},y_{a},\lambda))\,dr =
\end{equation*}
\begin{equation*}
\int_{K_{a}\times L} q_{a}[x_{a}||\mathcal{Y}_{a}\otimes \mathcal{L}]_{(y_{a},\lambda)}\,dr = q_{a}(\{x_{a}\} \times K_{a}\times L).
\end{equation*}
Thus $s_{a}$ is an extension of $q_{a}$.  Similarly, $s_{b}$ is an extension of $q_{b}$.

Since both $p$ and $\bar{p}$ satisfy locality, and $\bar{p}$ extends $\bar{r} = r \otimes u_{a}\otimes u_{b}$, by Fubini's Theorem we have
\begin{equation*}
\bar{p}(\{x\} \times K\times L) = \int_{K \times L\times \lbrack 0,1]_{a}\times \lbrack 0,1]_{b}} \bar{p}[x||\mathcal{Y}\otimes \bar{\mathcal{L}}]\,d\bar{r} =
\end{equation*}
\begin{equation*}
\int_{K \times L\times \lbrack 0,1]_{a}\times \lbrack 0,1]_{b}} s_{a}[x_{a}||\mathcal{Y}_{a}\otimes \bar{\mathcal{L}}] \times s_{b}[x_{b}||\mathcal{Y}_{b}\otimes \bar{\mathcal{L}}]\,d\overline{r} =
\end{equation*}
\begin{equation*}
\int_{K\times L}\int_{0}^{1}\int_{0}^{1} s_{a}[x_{a}||\mathcal{Y}_{a}\otimes \bar{\mathcal{L}}]\times s_{b}[x_{b}||\mathcal{Y}_{b}\otimes \bar{\mathcal{L}}]\,du_{a}\,du_{b}\,dr =
\end{equation*}
\begin{equation*}
\int_{K\times L} q_{a}[x_{a}||\mathcal{Y}_{a}\otimes \mathcal{L}] \times q_{b}[x_{b}||\mathcal{Y}_{b}\otimes \mathcal{L}]\,dr = p(\{x\} \times K\times L).
\end{equation*}
Thus $\bar{p}$ is an extension of $p$, and hence $\bar{p}$ is realization-equivalent to $p$.  This completes the proof.
\end{proof}
\medskip

All the results in Section \ref{s-properties} (\lq\lq Properties of Hidden-Variable Models\rq\rq), and Theorems \ref{t-firsttheorem} and \ref{t-secondtheorem} in this section, extend immediately to multipartite systems.  The only adjustment needed is that parameter independence must now be stated in terms of sets of parts instead of individual parts.  Interestingly, outcome independence and locality do not need to be restated.

\section{Endnote}
\label{s-endnote}
To keep things simple, we assumed in this paper that the outcome spaces $X_a$ and $X_b$ are finite.  However, the only result in this paper that requires this assumption is Theorem \ref{t-firsttheorem}.  We show in \cite[2012]{brandenburger-keisler12} that all of the results in Section \ref{s-properties} hold for arbitrary outcome spaces $X_a$ and $X_b$, and that Theorem \ref{t-secondtheorem} holds assuming only that the outcome spaces have countably generated $\sigma$-algebras of events $\mathcal{X}_{a}$ and $\mathcal{X}_{b}$.

It would be of interest to extend the methods in this paper to formulate other properties that have usually been studied only for the case of finite sets of measurements.  For finite probability spaces, Abramsky and Brandenburger \cite[2011]{abramsky-brandenburger11} establish a strict hierarchy of three properties: non-locality (\`a la Bell) is strictly weaker than possibilistic non-locality (exhibited by the Hardy \cite[1993]{hardy93} model), which is strictly weaker than strong contextuality (exhibited by the Greenberger, Horne, and Zeilinger \cite[1989]{greenberger-horne-zeilinger89} model).  (In this language, the Kochen-Specker Theorem \cite[1967]{kochen-specker67} is a model-independent proof of strong contextuality.)  Extending these latter properties to the general measure-theoretic setting appears to be an open direction.

\end{document}